\def\ps@headings{%
\def\@oddhead{\mbox{}\scriptsize\rightmark \hfil \thepage}%
\def\@evenhead{\scriptsize\thepage \hfil \leftmark\mbox{}}%
\def\@oddfoot{}%
\def\@evenfoot{}}
\def\blfootnote{\xdef\@thefnmark{}\@footnotetext}
\newtheorem{observation}{Observation}
\newcommand{\back}{\!\!\!}
\newcommand{\eq}{\!=\!}
\newcommand{\m}{\!-\!}
\newcommand{\Prob}{\mathbb{P}}
\newcommand{\Mean}{\mathbb{E}}
\newcommand{\Var}{\mathbb{V}}
\newcommand{\eqn}[1]{Eq.(#1)}
\newcommand{\vol}{\textrm{vol}}
\newcommand{\w}{\textrm{w}}
\newcommand{\z}{\textrm{w}}
\newcommand{\lowhat}[1]{\rlap{\raisebox{-0.5ex}{$\hat{\phantom{#1}}$}}#1}
\newcommand{\lowtilde}[1]{\rlap{\raisebox{-0.5ex}{$\tilde{\phantom{#1}}$}}#1}
\newcommand{\prop}{{\scriptscriptstyle\textrm{prop}}}
\newcommand{\opt}{{\scriptscriptstyle\textrm{opt}}}
\newcommand{\heur}{S-WRW\xspace}
\newcommand{\UIS}{{\scriptscriptstyle\textrm{UIS}}}
\newcommand{\WIS}{{\scriptscriptstyle\textrm{WIS}}}
\newcommand{\RW}{{\scriptscriptstyle\textrm{RW}}}
\newcommand{\MHRW}{{\scriptscriptstyle\textrm{MHRW}}}
\newcommand{\WRW}{{\scriptscriptstyle\textrm{WRW}}}
\newcommand{\NRMSE}{{\small\textrm{NRMSE}\xspace}}
\newcommand{\sss}[1]{{\scriptscriptstyle\textrm{#1}}}
\newcommand{\MySmall}[1]{{\scriptsize{#1}}}
\newcommand{\mybox}[1]{\noindent\frame{\parbox[c]{\columnwidth}{\begin{center}{\vspace{-0.28cm}#1}\vspace{-0.3cm}\end{center}}}}
\newcommand{\widearrow}{
\vspace{-0.07cm}
\includegraphics[width=0.11\textwidth]{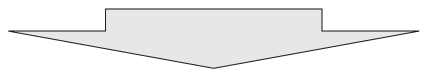}\\
\vspace{0.0cm}
}
\newcommand{\Sec}[1]{Sec.~\ref{#1}}
\newcommand{\Fig}[1]{Fig.~\ref{#1}}
\newcommand{\note}[1]{[\textcolor{red}{\textit{#1}}]}
\newcommand{\est}[1]{\widehat{#1}}
\newcommand{\ie}{{\em i.e., }}
\newcommand{\eg}{{\em e.g., }}
\newfont{\smallttlfnt}{phvb8t at 14pt}
\begin{document}

\conferenceinfo{SIGMETRICS'11,} {June 7--11, 2011, San Jose, California, USA.} 
\CopyrightYear{2011} 
\crdata{978-1-4503-0262-3/11/06} 
\clubpenalty=10000 
\widowpenalty = 10000

\title{Walking on a Graph with a Magnifying Glass\\
{\smallttlfnt Stratified Sampling via Weighted Random Walks}
}

\numberofauthors{4} 

\author{
Maciej Kurant, Minas Gjoka, Carter T. Butts, Athina Markopoulou\\
\affaddr{University of California, Irvine}\\
\email{\{mkurant, mgjoka, buttsc, athina\}@uci.edu}
}

\maketitle

\blfootnote{\!\!\!* This is an extended version of a paper with the same title presented at \emph{SIGMETRICS'11}. This work was supported by SNF grant PBELP2-130871, Switzerland, and by the NSF CDI Award 1028394, USA. 
\vspace{-1cm}
}

\begin{abstract}
Our objective is to sample the node set of a large unknown graph via crawling, to accurately estimate a given metric of interest.
We design a random walk on an appropriately defined weighted graph  
that achieves high efficiency by preferentially crawling those nodes and edges that convey greater information regarding the target metric.
Our approach begins by employing the theory of stratification 
to find optimal node weights, for a given estimation problem, under an independence sampler. While optimal under independence sampling, these weights may be impractical 
under graph crawling due to constraints arising from the structure of the graph.
Therefore, the edge weights for our random walk should be chosen so as to lead to an equilibrium distribution that strikes a balance between approximating the optimal weights under an independence sampler and achieving fast convergence.
We propose a heuristic approach (stratified weighted random walk, or S-WRW) that achieves this goal, while using only limited information about the graph structure and the node properties.
We evaluate our technique in simulation, and experimentally, by collecting a sample of Facebook college users.
We show that S-WRW requires \mbox{13-15} times fewer samples than the simple re-weighted random walk (RW) to achieve the same estimation accuracy for a range of metrics.

\end{abstract}

\section{Introduction}\label{Introduction}

\begin{figure*}
		\psfrag{A}[c]{\small{\textbf{(a)} Population}}
		\psfrag{B}[c]{\small{\textbf{(b)} WIS weights $\pi^\WIS$}}
		\psfrag{C}[c]{\small{\textbf{(c)} Social graph $G$}}
		\psfrag{D}[c]{\small{\textbf{(d)} $\pi^\WIS$ applied to WRW}}
		\psfrag{E}[c]{\small{\textbf{(e)} \heur weights}}
    \includegraphics[width=1\textwidth]{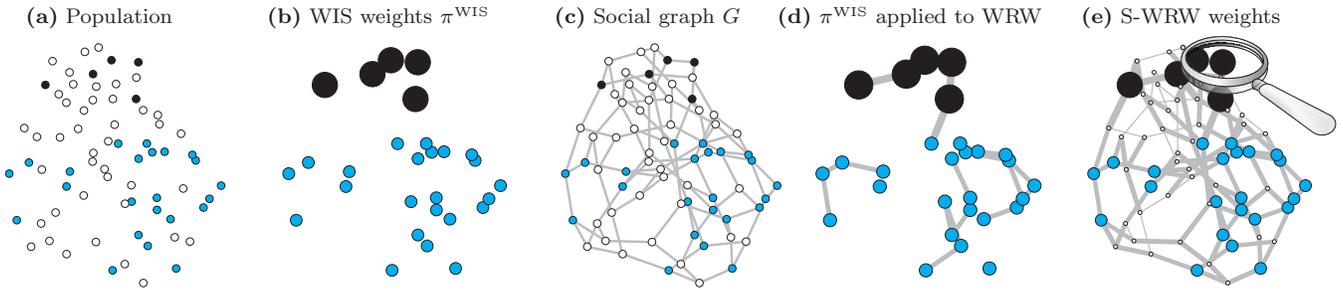}
    \vspace{-0.6cm}
\caption{Illustrative example. \ 
%
Our goal is to compare the blue and black subpopulations (\eg with respect to their median income) in population~(a). 
Optimal independence sampler, WIS~(b), over-samples the black nodes, under-samples the blue nodes, and completely skips the white nodes. 
A naive crawling approach, RW~(c), samples many irrelevant white nodes. 
WRW that enforces WIS-optimal probabilities may result in poor or no convergence~(d). 
\heur~(e) strikes a balance between the optimality of WIS and fast convergence.
}
\label{fig:illustration_magnifying_glass}
\vspace{-10pt}
\end{figure*}

Many types of online networks, such as online social networks (OSNs), Peer-to-Peer (P2P) networks,  or the World Wide Web (WWW), are measured and studied today via sampling techniques.
This is due to several reasons.
First, such graphs are typically too large to measure in their entirety, and it is desirable to be able to study them based on a small but representative  sample. 
Second, the information pertaining to these networks is often hard to obtain. For example, OSN service providers have access to all information in their user base, but rarely make this information publicly available.

There are many ways a graph can be sampled, \eg by sampling nodes, edges, paths, or other substructures~\cite{Leskovec2006_sampling_from_large_graphs,Kolaczyk2009}.
Depending on our measurement goal, the elements with different properties may have {\em different importance} and should be sampled with a different probability. 
For example, Fig.~\ref{fig:illustration_magnifying_glass}(a) depicts the world's population, with residents of China (1.3B people) represented by blue nodes,  of the Vatican (800 people) by black nodes, and all other nationalities represented by white nodes. Assume that we want to compare the median income in China and Vatican.
Taking a uniform sample of size 100 from the entire world's population is ineffective, because most of the samples will come from countries other than China and Vatican.
Even restricting our sample to the union of China and Vatican will not help much, as our sample is unlikely to include any Vatican resident. In contrast, uniformly sampling 50 Chinese and 50 Vaticanese residents 
would be much more accurate with the same sampling budget.

This type of problem has been widely studied in the statistical and survey sampling literature. A commonly used approach is \emph{stratified sampling}~\cite{Neyman1934,Cochran1977,Lohr2009}, 
where nodes (\eg people) are partitioned into a set of non-overlapping \emph{categories} (or strata). The objective is then to decide how many independent draws to take from each category, so as to minimize the uncertainty of the resulting measurement.
This effect can be achieved in expectation by a weighted independence sampler (WIS) with appropriately chosen sampling probabilities~$\pi^\WIS$.
In our example, WIS samples Vatican residents with much higher probabilities than Chinese ones, and avoids completely the rest of the world, as illustrated in Fig.~\ref{fig:illustration_magnifying_glass}(b).

However, WIS, as every independence sampler, requires a sampling frame, \ie a list of all elements we can sample from (\eg a list of all Facebook users).
This information is typically not available in today's online networks. 
A feasible alternative is 
\emph{crawling} (also known as exploration or link-trace sampling). 
It is a graph sampling technique in which we can see the neighbors of already sampled users and make a decision on which users to visit next.


In this paper, we study how to perform stratified sampling through graph crawling.
We illustrate the key idea and some of the challenges in Fig.~\ref{fig:illustration_magnifying_glass}.
Fig.~\ref{fig:illustration_magnifying_glass}(c) depicts a social network that connects the world's population.
A simple random walk (RW) visits every node with frequency proportional to its degree, which is reflected by the node size. In this particular example, for a simplicity of illustration, all nodes have the same degree equal to~3.
As a result, RW is equivalent to the uniform sample of the world's population, and faces exactly the same problems of wasting resources, by sampling all nodes with the same probability.

We address these problems by appropriately setting the edge weights and then performing a random walk on the weighted graph, which we refer to as \emph{weighted random walk} (WRW). One goal in setting the weights is to mimic the WIS-optimal sampling probabilities~$\pi^\WIS$ shown in Fig.~\ref{fig:illustration_magnifying_glass}(b).
However, such a WRW might perform poorly due to potentially slow mixing. In our example, it will not even converge because the underlying weighted graph is disconnected, as shown in Fig.~\ref{fig:illustration_magnifying_glass}(d). 
Therefore, the edge weights under WRW (which determine the equilibrium distribution~$\pi^\WRW$) should be chosen in a way that strikes a balance between the optimality of~$\pi^\WIS$ and fast convergence.

We propose {\em Stratified Weighted Random Walk} (\heur), a practical heuristic that effectively strikes such a balance. We refer to our approach as ``walking on the graph with a magnifying glass'', because \heur
over-samples more relevant parts of the graph and under-samples less relevant ones.
In our example, \heur results in the graph presented in Fig.~\ref{fig:illustration_magnifying_glass}(e).
The only information required by \heur are the categories of neighbors of every visited node, 
which is typically available in crawlable online networks, such as Facebook.
\heur uses two natural and easy-to-interpret parameters, namely: (i) $\tilde{f}_\ominus$, which controls the fraction of samples from irrelevant categories and (ii)~$\gamma$, which is the maximal resolution of our magnifying glass, with respect to the largest relevant category.

The main contributions of this paper are the following.
\begin{itemize}
\item We propose to improve the efficiency of crawling-based graph sampling methods, by performing a stratified weighted random walk  that takes into account not only the graph structure but also the node properties that are relevant to the measurement goal.
\item We design and evaluate \heur, a practical heuristic that sets the edge weights and operates with limited information.
\item As a case study, we apply \heur to sample Facebook and estimate the sizes of colleges. We show that \heur requires 13-15 times fewer samples than a simple random walk for the same estimation accuracy.
\end{itemize}

The outline of the rest of the paper is as follows. Section~2 summarizes the most popular graph sampling techniques, including sampling by exploration.
Section~3 presents classical stratified sampling. 
Section~4 combines stratified sampling with graph exploration, presenting a unified WRW approach that takes into account both network structure and node properties; various trade-offs and practical issues are discussed and an efficient heuristic (\heur) is proposed based on the insights. 
Section~5 presents simulation results. 
Section~6 presents an implementation of \heur for the problem of estimating the college friendship graph on Facebook.
Section~7 presents related work. Section~8 concludes the paper. 

\section{Sampling techniques}

\subsection{Notation}
We consider an undirected, static,\footnote{Sampling dynamic graphs is currently an active research area~\cite{Stutzbach2006-unbiased-p2p, Rasti09-RDS, Willinger09-OSN_Research}, but out of the scope of this paper.}
 graph $G=(V,E)$, with $N\eq|V|$~nodes and $|E|$~edges.
For a node $v\in V$, denote by $\deg(v)$ its degree, and by $\mathcal{N}(v)\subset V$ the list of neighbors of~$v$.
A graph~$G$ can be weighted.
We denote by $\w(u,v)$ the weight of edge $\{u,v\}\in E$, and by
\begin{equation}
\label{eq:w_u}
\w(u) = \sum_{v\in \mathcal{N}(u)} \w(u,v)
\end{equation}
the weight of node~$u\in V$.
For any set of nodes $A\subseteq V$, we define its volume $\vol(A)$ and weight $\w(A)$, respectively, as
\begin{equation}\label{eq:volume}
\vol(A) = \sum_{v\in A} \deg(v) \quad \textrm{ and } \quad \w(A) = \sum_{v\in A} \w(v).
\end{equation}
We will often use
\begin{equation}\label{eq:relative fractions}
f_A = \frac{|A|}{|V|}  \quad \textrm{ and } \quad f^\sss{vol}_A = \frac{\vol(A)}{\vol(V)}
\end{equation}
to denote the relative size of $A$ in terms of the number of nodes and the volumes, respectively.

\smallskip\noindent\textbf{Sampling.}
We collect a sample $S\subseteq V$ of $n\eq|S|$ nodes. $S$~may contain multiple copies of the same node, \ie the sampling is with replacement.
In this section, we briefly review the techniques for sampling nodes from graph~$G$. We also present the weighted random walk (WRW) which is the basic building block for our approach.

\subsection{Independence Sampling}

\smallskip\noindent\textbf{Uniform Independence Sampling (UIS)}
 samples the nodes directly from the set~$V$, with replacements, uniformly and independently at random, \ie with probability
\begin{equation}\label{eq:pi_UIS}
	\pi^\UIS(v)\ =\ \frac{1}{N} \qquad\textrm{ for every }v\in V.
\end{equation}

\smallskip\noindent\textbf{Weighted Independence Sampling (WIS)} is a weighted version of UIS. WIS samples the nodes directly from the set~$V$, with replacements, independently at random, but with probabilities proportional to node weights~$\z(v)$:
\begin{equation}\label{eq:pi_WIS}
	\pi^\WIS(v)\ =\ \frac{\z(v)}{\sum_{u\in V} \z(u)}.
\end{equation}

\smallskip
\noindent In general, UIS and WIS are not possible in online networks because of the lack of sampling frame. For example, the list of all user IDs may not be publicly available, or the user ID space may be too sparsely allocated.
Nevertheless, we present them as baseline for comparison with the random walks.

\subsection{Sampling via Crawling\label{sec:walks}}
In contrast to independence sampling, the crawling techniques are possible in many online networks, and are therefore the main focus of this paper.

\smallskip\noindent\textbf{Simple Random Walk (RW)} \cite{Lovasz93} selects the next-hop node~$v$ uniformly at random among the neighbors of the current node~$u$.
In a connected and aperiodic graph, the probability of being at the particular node~$v$ converges to the stationary distribution
\begin{equation}\label{eq:pi_RW}
	\pi^\RW(v)\ =\ \frac{\deg(v)}{2\cdot|E|}.
\end{equation}

\smallskip\noindent\textbf{Metropolis-Hastings Random Walk (MHRW)} is an application of the Metropolis-Hastings algorithm \cite{Metropolis1953} that modifies the transition probabilities to converge to a desired stationary distribution.
For example, we can achieve the uniform stationary distribution
\begin{equation}\label{eq:pi_MHRW}
	\pi^\MHRW(v)\ =\ \frac{1}{N}
\end{equation}
by randomly selecting a neighbor~$v$ of the current node~$u$ and moving there with probability $\min(1,\frac{\deg(u)}{\deg(v)})$.
%
However, it was shown in \cite{Rasti09-RDS, Gjoka2010} that RW (after re-weighting, as in Section~\ref{subsec:Correcting for the bias in RW, WRW and WIS}) outperforms MHRW for most applications. We therefore restrict our attention to comparing against RW.

\smallskip\noindent\textbf{Weighted Random Walk (WRW)} is RW on a weighted graph~\cite{AldousBookInPreparation}. At node~$u$, WRW chooses the edge $\{u,v\}$ to follow with probability $P_{u,v}$ proportional to the weight $\w(u,v)\geq 0$ of this edge, \ie
\begin{equation}\label{eq:P_u_v in WRW}
P_{u,v}=\frac{\w(u,v)}{\sum_{v'\in \mathcal{N}(u)} \w(u,v')}.
\end{equation}
The stationary distribution of WRW is:
\begin{equation}\label{eq:pi_WRW}
	\pi^\WRW(v)\ =\ \frac{\w(v)}{\sum_{u\in V} \w(u)}.
\end{equation}
WRW is the basic building block of our design. In the next sections, we show how to choose weights for a specific estimation problem.

\smallskip\noindent\textbf{Graph Traversals (BFS, DFS, RDS, ...)} is a family of crawling techniques where no node is sampled more than once. 
Because traversals introduce a generally unknown bias (see~\Sec{sec:related}), 
we do not consider them in this paper.

\subsection{Correcting the bias}
\label{subsec:Correcting for the bias in RW, WRW and WIS}
RW, WRW, and WIS all produce biased (nonuniform) node samples. 
But their bias is known and therefore can be corrected by an appropriate re-weighting of the measured values.
This can be done using the Hansen-Hurwitz estimator~\cite{HansenHurwitz1943} as first shown in \cite{Salganik2004, VolzHeckathorn08} for random walks and also used in \cite{Rasti09-RDS}.
Let every node $v\in V$ carry a value $x(v)$. We can estimate the population total $x_\sss{tot} = \sum_v x(v)$ by
%
\begin{equation}\label{f_tot}
	\hat{x}_\sss{tot} = \frac{1}{n}\sum_{v\in S} \frac{x(v)}{\pi(v)},
\end{equation}
where 
$\pi(v)$ is the sampling probability of node~$v$ in the stationary distribution.  
%
In practice, we usually know~$\pi(v)$, and thus~$\hat{x}_\sss{tot}$,  only up to a constant, \ie we know the (non-normalized) weights~$\w(v)$. 
This problem disappears when we estimate the population mean $x_\sss{av}=\sum_v x(v)/N$ as
%
\begin{equation}\label{f_tot}  
	\hat{x}_\sss{av} \ =\ \frac{\sum_{v\in S} \frac{x(v)}{\pi(v)}}{\sum_{v\in S} \frac{1}{\pi(v)}} \ =\  \frac{\sum_{v\in S} \frac{x(v)}{\w(v)}}{\sum_{v\in S} \frac{1}{\w(v)}}.
\end{equation}
For example, for $x(v)\eq1$ if $\deg(v)\eq k$ (and $x(v)\eq 0$ otherwise), $\hat{x}_\sss{av}(k)$  estimates 
the node degree distribution in~$G$. 



\smallskip
All the results in this paper are presented \emph{after this re-weighting} step, whenever necessary.


%

\section{Stratified Sampling}\label{sec:Node weight setting under WIS}

In Sec.~\ref{Introduction}, we argued that in order to compare the median income of residents of China and Vatican we should take 50 random samples from each of these two countries, rather than taking 100 UIS samples from China and Vatican together (or, even worse, from the world's population).
This problem naturally arises in the field of survey sampling. The most common solution is \emph{stratified sampling}~\cite{Neyman1934,Cochran1977,Lohr2009}, where nodes~$V$ are partitioned into a set~$\mathcal{C}$ of non-overlapping node categories (or ``strata''), with $\bigcup_{C\in\mathcal{C}}C = V$. Next, we select uniformly at random~$n_i$ nodes from category~$C_i$. We are free to choose the 
allocation $(n_1,n_2,\ldots,n_{|\mathcal{C}|})$, as long as we respect the total budget of samples~$n\eq\sum_i n_i$. 

Under \emph{proportional allocation}~\cite{Lohr2009} (or ``prop') we use $n_i\propto |C_i|$, \ie
\begin{equation}\label{eq:n_i proportional allocation}
 n_i^\prop\ =\ |C_i|\cdot n/N.
\end{equation}

Another possibility is to do an {\em optimal} allocation (or ``opt'') that minimizes the variance $\Var$ of our estimator for the specific problem of interest.
 For example, assume that every node $v\in V$ carries a value~$x(v)$, and we may want to estimate the mean of $x$ 
in various scenarios, as discussed below.

\subsection{Examples of Stratified Sampling Problems}

\subsubsection{Estimating the mean across the entire $V$}\label{subsec:The mean across entire V}
A classic application of stratification is to better estimate the population mean~$\mu$, given several groups (strata) of different properties (\eg variances).
%
Given $n_i$ samples from category $C_i$, we can estimate the mean $\mu_i = \frac{1}{|C_i|}\sum_{v\in C_i} x(v)$ over category~$C_i$ by
\begin{equation}\label{eq:hat mu i}
	\hat{\mu}_i = \frac{1}{n_i}\sum_{v\in S\cap C_i} x(v) \qquad \textrm{ with } \qquad \Var(\hat{\mu}_i)=\frac{\sigma_i^2}{n_i},
\end{equation}
where $\Var(\hat{\mu}_i)$ is the variance of this estimator
and $\sigma_i^2$ is the variance of population~$C_i$.
We can estimate population mean~$\mu$ by a weighted average over all~$\hat{\mu}_i$s~\cite{Lohr2009}, \ie
$$\hat{\mu} = \sum_i \frac{|C_i|}{N} \cdot \hat{\mu}_i \qquad \textrm{ with } \qquad \Var(\hat{\mu}) = \sum_i \frac{(|C_i|)^2\cdot\sigma_i^2}{N^2\cdot n_i}.$$ 
Under proportional allocation (\eqn{\ref{eq:n_i proportional allocation}}), this boils down to
$\Var(\hat{\mu}^\prop)\ =\  \frac{1}{N\cdot n} \ \sum_i |C_i|\cdot\sigma_i^2$.
However, we can apply Lagrange multipliers to find that $\Var(\hat{\mu})$ is minimized when
\begin{equation}\label{eq: optimal n_i for The mean across entire V}
	n_i^\opt = \frac{|C_i|\cdot\sigma_i}{\sum_j |C_j|\cdot\sigma_j}\cdot n.
\end{equation}
This solution is sometimes called `Neyman allocation'~\cite{Neyman1934}.
This gives us the variance under optimal allocation
$\Var(\hat{\mu}^\opt)\ =\ \frac{1}{N^2\cdot n}\  \left(\sum_i |C_i|\cdot\sigma_i\right)^2$.

The variances $\Var(\hat{\mu}^\prop)$ and $\Var(\hat{\mu}^\opt)$ are  measures of the performance of proportional and optimal allocation, respectively.
%
In order to make their practical interpretation easier, we also show how these variances translate into sample lengths.
We define as \emph{gain}~$\alpha$ of `opt' over `prop' the number of times `prop' must be longer than `opt' in order to achieve the same variance
$$\textrm{gain }\ \alpha \ =\  \frac{n^\prop}{n^\opt}, \ \textrm{ subject to }\ \Var^\prop\eq\Var^\opt.$$
In that case, the gain is
\begin{equation}\label{eq: gain under the mean across entire V}
	\alpha\ \ =\ \ N\cdot \frac{\sum_i |C_i|\cdot\sigma_i^2}{\left(\sum_i |C_i|\cdot\sigma_i\right)^2} \qquad (\geq 1).
\end{equation}
Notice that this gain does not depend on the sample budget $n$.
The gain is one of the main metrics we will use in the evaluation sections to assess the efficiency of our technique
compared to the random walk.


\subsubsection{Highest precision for all categories}\label{subsec: Highest precision for all}
If we are equally interested in each category, we might want the same (highest possible) precision of estimating~$\mu_i$ for all categories~$C_i$. In this case, the metric to minimize is
$	\Var_{\max}\ =\ \max_i\left\{\Var(\hat{\mu}_i)\right\}\ = \max_i\left\{\frac{\sigma_i^2}{n_i}\right\}.  $
Under proportional allocation, this translates to
$\Var_{\max}^\prop \ =  \frac{N}{n} \max_i \frac{\sigma_i^2}{|C_i|}$.
But the optimal $n_i$, which makes $\Var(\hat{\mu}_i)$ equal for all $i$, is
\begin{equation}\label{eq: optimal n_i for max}
	n_i^\opt = \frac{\sigma^2_i}{\sum_j \sigma^2_j}\cdot n.
\end{equation}
Consequently, $\Var_{\max}^\opt \ = \ \frac{\sum_i \sigma_i^2}{n},$
which leads to gain
\begin{equation}\label{eq:gain under max}
	\alpha = \frac{\max_i \left\{\frac{N}{|C_i|} \sigma_i^2\right\}}  {\sum_i \sigma_i^2}\quad (\geq 1).
\end{equation}

\subsubsection{Smallest sum of variances across categories}
Even if we are interested in all categories, an alternative objective is to maximize the \emph{average} precision of category pair comparisons (see Sec.~5A.13 in~\cite{Cochran1977}), which is equivalent to minimizing the sum
$	\Var_\Sigma=\sum_i \Var(\hat{\mu}_i)=\sum_i \frac{\sigma_i^2}{n_i}.$
%
In this case, proportional allocation achieves
$\Var_\Sigma^\prop = \frac{N}{n} \sum_i \frac{\sigma_i^2}{|C_i|}$.
while, using Lagrange multipliers we get
\begin{equation}\label{eq: optimal n_i for sum}
	n_i^\opt = \frac{\sigma_i}{\sum_j \sigma_j}\cdot n \qquad \textrm{ and } \qquad \Var_\Sigma^\opt = \frac{\left(\sum_i \sigma_i\right)^2}{n},
\end{equation}
which leads to gain
\begin{equation}\label{eq:gain under smallest sum of variances}
	\alpha \ =\ \frac{\sum_i \frac{N}{|C_i|} \sigma_i^2}  {\left(\sum_i \sigma_i\right)^2}\quad (\geq 1).
\end{equation}


\subsubsection{Relative sizes of node categories}\label{subsec: Relative sizes of node categories}
Stratified sampling assumes that we know the sizes~$|C_i|$ of node categories. In some applications, however, these sizes are unknown and among the values we need to estimate as well (\eg by using UIS or WIS).
We show in Appendix~C (for $|\mathcal{C}|\eq 2$) that
the optimal sample allocation and the corresponding gain~$\alpha$ of WIS over UIS are respectively
\begin{equation}\label{eq: Relative sizes of node categories}
	n_i^\WIS = \frac{1}{|\mathcal{C}|}\cdot n \quad \textrm{ and }\quad \alpha\ =\ \frac{N^2}{4|C_1|\cdot|C_2|}.
\end{equation}


\subsubsection{Irrelevant category $C_\ominus$ (aggregated)} \label{subsec:Irrelevant category}
In many practical cases, we may want to measure some (but not all) node categories.
E.g., in Fig.~\ref{fig:illustration_magnifying_glass}, we are interested in blue and black nodes, but not in white ones. Similarly, in our Facebook study in Section~\ref{Applications} we are only interested in self-declared college students, which accounts for only 3.5\% of all users.
We group all categories not covered by our measurement objective as a single \emph{irrelevant category} $C_\ominus\in \mathcal{C}$, and we set~$n^\opt_\ominus=0$.
In contrast, $n_\ominus^\prop\ =\ |C_\ominus|\cdot n/N$. As a result, under `opt' we have $N / (N\m |C_\ominus|)$ times more useful samples than under `prop'. Now, if we allocate optimally all these useful samples between the relevant categories $\mathcal{C}\setminus\{C_\ominus\}$,
the gain~$\alpha$ becomes
\begin{equation}\label{eq:alpha hybrid}
	\alpha \ \ = \ \ \frac{N}{N-|C_\ominus|}\ \cdot\ \alpha(\mathcal{C}\setminus\{C_\ominus\}),
\end{equation}
where $\alpha(\mathcal{C}\setminus\{C_\ominus\})$ is the gain
(\ref{eq: gain under the mean across entire V}), (\ref{eq:gain under max}), (\ref{eq:gain under smallest sum of variances}) or (\ref{eq: Relative sizes of node categories}),
depending on the metric, calculated only within categories $\mathcal{C}\setminus\{C_\ominus\}$.

In other words, gain~$\alpha$ is now composed of two factors: (i)~gain in avoiding irrelevant categories, and (ii)~gain in optimal allocation of samples among the relevant categories.

\subsubsection{Practical Guideline}
Let us look at the optimal weights in the above scenarios, when all $\sigma_i=\sigma$ are the same. This is a reasonable working assumption in many practical settings, since we typically do not have prior estimates of~$\sigma_i$.
With this simplification, \eqn{\ref{eq: optimal n_i for The mean across entire V}} becomes
$$n_i^\opt\ =\ \frac{|C_i|}{N}\cdot n\ =\ n_i^\prop.$$
In contrast, \eqn{\ref{eq: optimal n_i for max}}, \eqn{\ref{eq: optimal n_i for sum}} and \eqn{\ref{eq: Relative sizes of node categories}} get simplified to
$$ n_i^\opt\ =\ \frac{1}{|\mathcal{C}|}\cdot n.$$
In conclusion, if we are interested in comparing the node categories with respect to some properties (\eg average node degree, category size), rather than estimating a property across the entire population, we should take an \emph{equal number of samples from every relevant category}.

\section{Edge weight setting under WRW}\label{sec:Edge weight setting under WRW}

In the previous section, we studied the optimal sample allocation under (independence) stratified sampling.
However, independence node sampling is typically impossible in large online graphs, while crawling the graph is
a natural, available exploration primitive. In this section, we show how to perform a weighted random walk (WRW) which approximates the stratified sampling of the previous section. 
We can formulate the general problem as follows:

\emph{Given a measurement objective, error metric and sampling budget~$|S|\eq n$, 
set the edge weights in graph~$G$ such that the WRW measurement error is minimized.}

Although we are able to solve this problem analytically for some specific and fully known topologies, it is not obvious how to address it in general, especially under a limited knowledge of~$G$. 
Instead, in this paper, we propose \heur, a heuristic to set the edge weights. \heur starts from a solution optimal under WIS, and takes into account practical issues that arise in graph exploration. 
Once the  weights are set, we simply perform WRW as described in Section \ref{sec:walks} and collect samples.

\subsection{Preliminaries}

\subsubsection{Category-level granularity}
One can think of the problem in two levels of granularity: the original graph $G\eq (V,E)$ and the {\em category graph} $G^C \eq (\mathcal{C}, E^C)$. In~$G^C$, nodes represent categories, and every undirected edge $\{C_1,C_2\}\in E^C$ represents the corresponding non-empty set of edges $E_{C_1,C_2}\subset E$ in the original graph~$G$, \ie
$$E_{C_1,C_2}\ = \{\{u,v\}\in E:\ u\in C_1 \textrm{ and }  v\in C_2\} \neq \emptyset.$$

In our approach, we move from the finer granularity of $G$ to the coarser granularity of $G^C$. 
This means that we are interested in collecting, say, $n_i$ samples from category $C_i$, but we do not control how these $n_i$ nodes are collected (\ie with what individual sampling probabilities). 

The rationale for that simplification is twofold. 
From a theoretical point of view, categories are exactly the properties of interest in the estimation problems we consider. 
From a practical point of view, it is relatively easy to obtain or infer information about categories, as we show \eg in~\Sec{subsec:Estimation of Category Volumes}.

\subsubsection{Stratification in expectation}
Ideally, we would like to enforce strictly stratified sampling. 
However, when we use crawling instead of independence sampling, sampling exactly~$n_i$ nodes from category~$C_i$ (and no other nodes) is possible only by discarding observations. 
It is thus more natural to frame the problem in terms of the probability mass placed on each category in equilibrium. 
This can be achieved by making the weight $\w(C_i)$ of each category proportional to the desired number~$n_i$ of samples, \ie 
\begin{equation}\label{w(C_i) propto n_i}
	\w(C_i)\ \propto\ n_i.
\end{equation}
As a result, we draw $n_i$ samples from $C_i$ \emph{in expectation}. 


%
%

\subsubsection{Main guideline}
As the main guideline, \heur tries to realize the category weights $\w^\WIS(C_i)$ that are optimal under WIS. 
There are many edge weight settings in~$G$ that achieve~$\w^\WIS(C_i)$. 
In our implementation, we observe that~$\vol(C_i)$ counts the number of edges incident on nodes of~$C_i$. Consequently, if for every category~$C_i$ we set in~$G$ the weights of all edges incident on nodes in~$C_i$ to
\begin{equation}\label{eq:w_e(C_i)}
	\w_e(C_i)\ =\ \frac{\w^\WIS(C_i)}{\vol(C_i)}.
\end{equation}
then weight $\w^\WIS(C_i)$ are achieved.\footnote{There exist many other edge weight assignments that lead to $\w^\WIS(C_i)$. \eqn{\ref{eq:w_e(C_i)}} has the advantage of distributing the weights evenly across all~$\vol(C_i)$ edges.}
This simple observation is central to the \heur heuristic. 

In order to apply \eqn{\ref{eq:w_e(C_i)}}, we first have to calculate or estimate its terms $\vol(C_i)$ and $\w^\WIS(C_i)$.\footnote{\label{footnote:constant factor}In fact, we need to know $\w_e(C_i)$ in \eqn{\ref{eq:w_e(C_i)}} only \emph{up to a constant factor}, because these factors cancel out in the calculation of transition probabilities of WRW in \eqn{\ref{eq:P_u_v in WRW}}. Consequently, the same applies to $\vol(C_i)$ and $\w^\WIS(C_i)$.}
Below, we show how to do it in Step~1 and 2, respectively. Next, in Steps~3-5, we show how to modify these terms to account for practical problems arising mainly from the underlying graph structure.

\begin{figure}
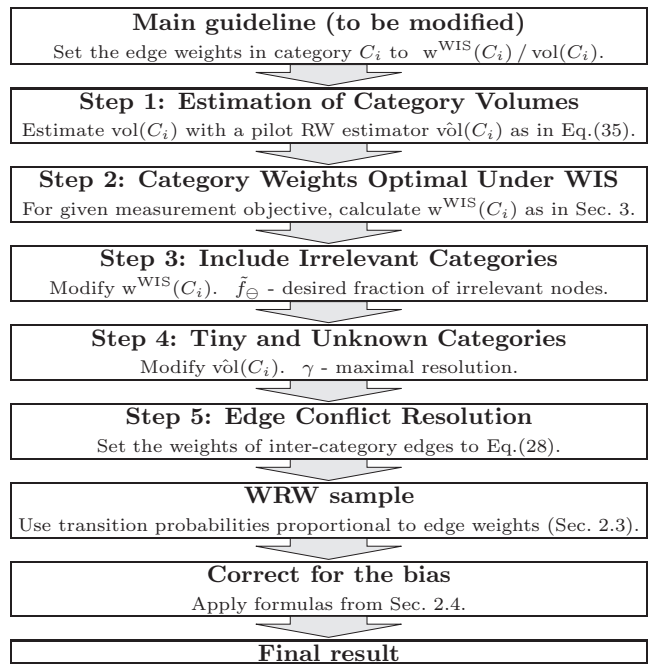

\begin{center}
\mybox{\textbf{Main guideline (to be modified)}\\
\MySmall{Set the edge weights in category $C_i$ to \ $\w^\textrm{WIS}(C_i)\, /\, \vol(C_i).$}}\\
\widearrow
\mybox{\textbf{Step 1: Estimation of Category Volumes}\\ 
\MySmall{Estimate $\vol(C_i)$ with a pilot RW estimator $\lowhat{\vol}(C_i)$ as in \eqn{\ref{vol c 2 RW}}.}}\\
\widearrow
\mybox{\textbf{Step 2: Category Weights Optimal Under WIS}\\
\MySmall{For given measurement objective, calculate $\w^\WIS(C_i)$ as in \Sec{sec:Node weight setting under WIS}.}}\\
\widearrow
\mybox{\textbf{Step 3: Include Irrelevant Categories}\\
\MySmall{Modify $\w^\WIS(C_i)$. \ $\tilde{f}_\ominus$ - desired fraction of irrelevant nodes.}}\\
\widearrow
\mybox{\textbf{Step 4: Tiny and Unknown Categories}\\
\MySmall{Modify $\lowhat{\vol}(C_i)$. \ $\gamma$ - maximal resolution.}}\\
\widearrow
\mybox{\textbf{Step 5: Edge Conflict Resolution}\\
\MySmall{Set the weights of inter-category edges to \eqn{\ref{eq:hybrid}}.}}\\
\widearrow
\mybox{\textbf{WRW sample}\\
\MySmall{Use transition probabilities proportional to edge weights (\Sec{sec:walks}).}}\\
\widearrow
\mybox{\textbf{Correct for the bias}\\
\MySmall{Apply formulas from \Sec{subsec:Correcting for the bias in RW, WRW and WIS}.}}\\
\widearrow\vspace{-0.04cm}
\mybox{\textbf{Final result}}\vspace{-0.6cm}
\end{center}
\caption{Overview of our approach.}
\label{fig:diagram}
\vspace{-0.3cm}
\end{figure}

\subsection{ Our practical solution: \heur}\label{subsec:heur}

\subsubsection{Step 1: Estimation of Category Volumes}
\label{subsec:Estimation of Category Volumes}
In general, we have no prior information about $G$ or $G^C$. 
Fortunately, it is easy and inexpensive 
estimate the relative category volumes $f^\sss{vol}_i$ 
which is the first piece of information we need in \eqn{\ref{eq:w_e(C_i)}} (see footnote~\ref{footnote:constant factor}).
Indeed, it is enough to run a relatively short pilot RW, and plug the collected sample~$S$ in \eqn{\ref{vol c 2 RW}} derived in Appendix B, as follows
$$
	\est{f}^\sss{vol}_i \ =\  \ \frac{1}{n}  \sum_{u\in S}\left(\frac{1}{\deg(u)}\sum_{v \in \mathcal{N}(u)}\!\! 1_{\{v\in C_i\}}\right).
$$

\subsubsection{Step 2: Category Weights Optimal Under WIS}
In order to find the optimal WIS category weights~$\w^\WIS(C_i)$ in \eqn{\ref{eq:w_e(C_i)}}, we first calculate $n_i^\opt$ as shown, under various scenarios, in Sec.~\ref{sec:Node weight setting under WIS}. 
Next, we plug the resulting~$n_i^\opt$ in~\eqn{\ref{w(C_i) propto n_i}}, \eg by setting $\w^\WIS(C_i) = n^\opt_i$.

\subsubsection{Step 3: Irrelevant Categories}
\label{subsec:Step 3: Irrelevant Categories}

\smallskip
\noindent\textbf{Problem: Potentially poor or no convergence.}
Consider the toy example in Fig.~\ref{fig:tiny_example}(a). 
We are interested in finding the relative sizes of red (dark) and green (light) categories. The white node in the middle is irrelevant for our measurement objective.
Due to symmetry, we distinguish between two types of edges with weights $w_1$ and $w_2$.
Under WIS, \eqn{\ref{eq: Relative sizes of node categories}} gives us the optimal weights $w_1>0$ and $w_2=0$, \ie WIS samples every non-white node with the same probability and never samples the white one. 
However, under WRW with these weights, relevant nodes get disconnected into two components and WRW does not converge. 
We observed a similar problem in Fig.~\ref{fig:illustration_magnifying_glass}.

\smallskip
\noindent\textbf{Guideline: Occasionally visit irrelevant nodes.}
We show in Appendix~D that the optimal WRW weights in Fig.~\ref{fig:tiny_example}(a) are $w_1=0$ and $w_2>0$.  
In that case, half of the samples are due to visits in the white (irrelevant) node. 
In other words, WRW may benefit from allocating small weight $\w(C_\ominus)\!>\!0$ to category $C_\ominus$ that groups all (if any) categories irrelevant to our estimation.
The intuition is that irrelevant nodes may not contribute to estimation but may be needed for connectivity or fast mixing.

\smallskip
\noindent\textbf{Implementation in \heur.}
In \heur, we achieve this goal by replacing $\w^\WIS(C_i)$ with
\begin{equation}
\tilde{\w}^\WIS(C_i) = \left\{
	\begin{array}{ll}
			\w^\WIS(C_i) & \textrm{if } C_i \neq C_\ominus\\			
			\tilde{f}_\ominus \cdot \sum_{C\neq C_\ominus} \w^\WIS(C) & \textrm{if } C_i = C_\ominus.		
		\end{array}	
		\right.
\end{equation}
The parameter $0\leq\tilde{f}_\ominus\ll 1$ controls the desired fraction of visits in $C_\ominus$.

\begin{figure}[t]
		\psfrag{a}[l]{$w_1$}
		\psfrag{b}[c]{$w_2$}
		\psfrag{A}[l]{\textbf{(a)}}
		\psfrag{B}[l]{\textbf{(b)}}
		\psfrag{X}[lt]{\parbox[c]{3.6cm}{\small{WIS:\hspace{0.22cm} $w_1\!>\!0,\ w_2\eq0$\\WRW: $w_1\eq 0,\ w_2\!>\!0$}}}  %
		\psfrag{Y}[lt]{\parbox[c]{4.6cm}{\small{WIS:\hspace{0.23cm} $w_1 = 190\, w_2$\\
		WRW: $w_1 \cong 60\, w_2$ \textrm{ for }n=50\\
		${}$\qquad\quad \ $w_1 \cong 100\, w_2$ \textrm{ for }n=500\\
		${}$\qquad\quad \ $w_1 = 190\, w_2$ \textrm{ for }$n\rightarrow\infty$}}}  %
    \includegraphics[width=0.46\textwidth]{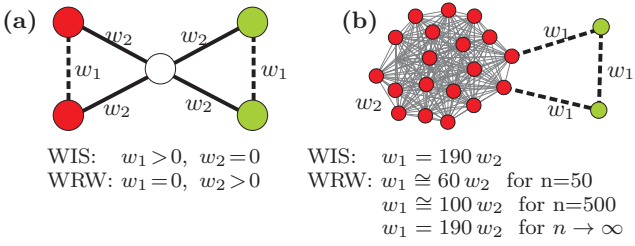}
\caption{Optimal edge weights: WIS vs WRW. The objective is to
compare the sizes of red (dark) and green (light) categories. 
}
\label{fig:tiny_example}
\vspace{-10pt}
\end{figure}

\subsubsection{Step 4: Tiny and Unknown Categories}
\label{subsec:Tiny and Unknown Categories}

\smallskip
\noindent\textbf{Problem: ``black holes''.}
Every optical system has a fundamental magnification limit due to diffraction 
and our ``graph magnifying glass'' is no exception. Consider the toy graph in Fig.~\ref{fig:tiny_example}(b): it consists of a big clique $C_\sss{big}$ of 20 red nodes with edge weights~$w_2$, and a green category~$C_\sss{tiny}$ with two nodes only and edge weights $w_1$. In \Sec{subsec: Relative sizes of node categories}, we saw that WIS optimally estimates the relative sizes of red and green categories for $\w(C_\sss{big})\eq\w(C_\sss{tiny})$, \ie for $w_1\eq 190\,w_2$. 
However, for such large values of $w_1$, the two green nodes behave as a sink (or a ``black hole'') for a WRW of finite length, thus  increasing the variance of the category size estimation. 
 
\smallskip
\noindent\textbf{Guideline: limit edge weights.}
In other words, although WIS suggests to over-sample small categories, WRW should ``under-over-sample'' very small categories to avoid black holes. 
For example, in Fig.~\ref{fig:tiny_example}(b) $w_1\simeq 60\,w_2 \ (\ll 190 w_2)$ is optimal for WRW of length~$n\eq 50$ (simulation results).

\smallskip
\noindent\textbf{Implementation in \heur.}
In \heur, we achieve this goal by replacing $\vol(C_i)$ in \eqn{\ref{eq:w_e(C_i)}} with
\begin{eqnarray}
\label{eq:volC max} \lowtilde{\vol}(C) &=& \max\Big\{\lowhat{\vol}(C),\ \vol_{min}\Big\},\quad \textrm{ where }\\
\label{eq:vol_min} \vol_{min} &=& \frac{1}{\gamma}\cdot \max_{C\neq C_\ominus}\{\lowhat{\vol}(C)\}.
\end{eqnarray}
Moreover, this formulation takes care of every category~$C$ that was not discovered by the pilot RW in \Sec{subsec:Estimation of Category Volumes}, by setting $\lowtilde{\vol}(C)\eq \vol_{min}$.


\subsubsection{Step 5: Edge Conflict Resolution}
\label{subsec:Step 5: Edge Conflict Resolution}

\smallskip
\noindent\textbf{Problem: Conflicting desired edge weights.} 
With the above modifications, our target edge weights defined in Eq.(\ref{eq:w_e(C_i)}) can be rewritten as
\begin{equation}\label{eq:tilde w_e(C_i)}
	\tilde{\w}_e(C_i)\ =\ \frac{\tilde{\w}^\WIS(C_i)}{\lowtilde{\vol}(C_i)}.
\end{equation}
We can directly set the weight $\w(u,v)\eq\tilde{\w}_e(C(u))\eq\tilde{\w}_e(C(v))$ 
for every intra-category edge $\{u,v\}$. 
However, for every inter-category edge, we usually have ``conflicting'' weights  $\tilde{\w}_e(C(u))\neq \tilde{\w}_e(C(v))$ desired at the two ends of the edge.

\smallskip
\noindent\textbf{Guideline: prefer inter-category edges.}
There are several possible edge weight assignments that achieve the desired category node weights. High weights on intra-category edges and small weights on inter-category edges result in WRW staying in small categories~$C_\sss{tiny}$ for a long time. 
In order to improve the mixing time, we should do exactly the opposite, \ie  assign relatively high weights to inter-category edges (connecting relevant categories). As a result, WRW will enter $C_\sss{tiny}$ more often, but will stay there for a short time.
This intuition is motivated by Monte Carlo variance reduction techniques such as the use of \emph{antithetic variates}~\cite{Gentle2003}, which seek to induce negative correlation between consecutive draws so as to reduce the variance of the resulting estimator.

\smallskip
\noindent\textbf{Implementation in \heur.} We choose to assign an edge weight  $\tilde{\w}_e$ that is in between these two values $\tilde{\w}_e(C(u))$ and $\tilde{\w}_e(C(v))$.
We considered several candidate  such assignments.
We may take the {\em arithmetic} or {\em geometric} mean of the conflicting weights, which we denote by $\w^\sss{ar}(u,v)$ and $\w^\sss{ge}(u,v)$, respectively. 
We may also use the {\em maximum} of the two values, $\w^\sss{max}(u,v)$, which should improve mixing according to the discussion above.
However, $\w^\sss{max}(u,v)$ alone would also add high weight to irrelevant nodes~$C_\ominus$ (possibly far beyond $\tilde{f}_\ominus$). 
To avoid this undesired effect, we distinguish between the two cases by defining a hybrid solution:
\begin{equation}\label{eq:hybrid}
	\w^\sss{hy}(u,v) = \left\{
        \begin{array}{ll}
                        \w^\sss{ge}(u,v)  & \textrm{if }C_\ominus\in\{C(u),C(v)\}\\
                        \w^\sss{max}(u,v) &\textrm{otherwise.}
                \end{array}
                \right.	
\end{equation}
This hybrid edge assignment was the one we found to work best in practice - see Section \ref{Applications}.

\subsection{Discussion}


\subsubsection{Information needed about the neighbors}
\label{subsec:Information needed about the neighbors}
In the pilot RW (\Sec{subsec:Estimation of Category Volumes}) as well as in the main WRW, 
we assume that by sampling a node~$v$ we also learn the category (but not degree) of each of its neighbors $u\in\mathcal{N}(v)$. 
Fortunately, such information is often available in most online graphs at no additional cost, especially when scraping html pages (as we do). 
For example, when sampling colleges in Facebook (\Sec{Applications}), we use the college membership information of all~$v$'s neighbors, which, in Facebook, is available at~$v$ together with the friends list.


\subsubsection{Cost of pilot RW}
\label{subsec:Cost of pilot RW}
The pilot RW volume estimator described in~\Sec{subsec:Estimation of Category Volumes} considers the categories not only of the sampled nodes, but also of their neighbors.
As a result, it achieves high efficiency, as we show in simulations (\Sec{subsec:Estimating volumes is usually cheap}) and Facebook measurements (\Sec{subsec:Measurement Setup}). 
Given that, and high robustness of \heur to estimation errors (see~\Sec{Robustness to gamma and volume estimation}), pilot RW should be only a small fraction of the later WRW (\eg 6.5\% in our Facebook measurements in \Sec{Applications}).


\subsubsection{Setting the parameters}
\label{subsec:Setting the parameters}

\heur sets the edge weights trying to achieve roughly $\w^\WIS(C_i)$ as the main goal. We slightly shape $\w^\WIS(C_i)$ to avoid black holes and improve mixing, which is controlled by two natural and easy-to-interpret parameters, $\tilde{f}_\ominus$ and $\gamma$.

\smallskip
\noindent\textbf{Irrelevant nodes visits~$\tilde{f}_\ominus$.}\quad
The parameter $0\leq\tilde{f}_\ominus\ll 1$ controls the desired fraction of visits in $C_\ominus$.
When setting~$\tilde{f}_\ominus$, we should exploit the information provided by the pilot crawl. 
If the relevant categories appear poorly interconnected and often separated by irrelevant nodes, we should set $\tilde{f}_\ominus$ relatively high. 
We have seen an extreme case in \Fig{fig:tiny_example}(a), with disconnected relevant categories and optimal~$\tilde{f}_\ominus\eq0.5$.
In contrast, when the relevant categories are strongly interconnected, we should use much smaller~$\tilde{f}_\ominus$. 
However, because we can never be sure that the graph induced on relevant nodes is connected, we recommend always using~$\tilde{f}_\ominus>0$. 
For example, when measuring Facebook in \Sec{Applications}, we set~$\tilde{f}_\ominus=1\%$. 

\smallskip
\noindent\textbf{Maximal resolution~$\gamma$.}\quad
The parameter $\gamma\geq 1$ can be interpreted as the maximal resolution of our ``graph magnifying glass'', with respect to the largest relevant category~$C_\sss{big}$. 
\heur will typically sample well all categories that are less than $\gamma$~times smaller than $C_\sss{big}$; all categories smaller than that are relatively undersampled (see~\Sec{subsec:Robustness to the choice of gamma}).
In the extreme case, for $\gamma\rightarrow\infty$, \heur tries to cover every category, no matter how small, which may cause the ``black hole'' problem discussed in~\Sec{subsec:Tiny and Unknown Categories}. 
In the other extreme, for $\gamma\eq1$ (and identical~$\w^\WIS(C_i)$ for all categories, including $C_\ominus$), 
\heur reduces to RW. \quad 
We recommend always setting~$1<\gamma<\infty$. Ideally, we know $|C_\sss{smallest}|$ - the smallest category size that is still relevant to us. In that case we should set~$\gamma=|C_\sss{big}|/|C_\sss{smallest}|$.\footnote{Strictly speaking, $\gamma$ is related to volumes~$\vol(C_i)$ rather than sizes $|C_i|$. They are equivalent when category volume is proportional to its size, which is often the case, and is the central assumption in the ``scale-up method''~\cite{Bernard2010}.}
For example, in \Sec{Applications} the categories are US colleges; we set $\gamma\eq1000$, because colleges with size smaller than 1/1000th of the largest one (\ie with a few tens of students) seem irrelevant to our measurement. \quad 
As another rule of thumb, we should try to set smaller~$\gamma$ for relatively small sample sizes and in graphs with tight community structure (see~\Sec{Robustness to gamma and volume estimation}).

\subsubsection{Conservative approach} 

Note that a reasonable setting of these parameters (\ie $\tilde{f}_\ominus>0$ and $1<\gamma<\infty$, and any conflict resolution discussed in the paper), increases the weights of large categories (including $C_\ominus$) and decreases the weight of small categories, compared to $\w^\WIS(C_i)$.
This makes \heur allocate category weights between the two extremes: RW 
and WIS. 
Consequently, \heur can be considered \emph{conservative} (with respect to WIS). 

\subsubsection{\heur is unbiased}
It is also important to note that because the collected WRW sample is eventually corrected with the actual sampling weights as described in~\Sec{subsec:Correcting for the bias in RW, WRW and WIS}, \heur estimation process is \emph{unbiased}, regardless of the choice of weights (so long as convergence is attained). 
In contrast, suboptimal weights (\eg due to estimation error of~$\est{f}^\sss{vol}_C$) can increase WRW mixing time, and/or the \emph{variance} of the resulting estimator. 
However, our simulations and empirical experiments on Facebook (see Sec.~5 and~6) show that \heur is very robust to suboptimal choice of weights.

\begin{figure*}
\begin{center}
		\psfrag{A}[r][]{\textbf{(a)}}
		\psfrag{B}[r][]{\textbf{(b)}}
		\psfrag{C}[r][]{\textbf{(c)}}
		\psfrag{D}[r][]{\textbf{(d)}}
		\psfrag{E}[r][]{\textbf{(e)}}
		\psfrag{F}[r][]{\textbf{(f)}}
		\psfrag{G}[r][]{\textbf{(g)}}
		\psfrag{H}[r][]{\textbf{(h)}}
		\psfrag{I}[r][]{\textbf{(i)}}
		\psfrag{J}[r][]{\textbf{(j)}}
		\psfrag{K}[r][]{\textbf{(k)}}
		\psfrag{L}[r][]{\textbf{(l)}}
		\psfrag{M}[r][]{\textbf{(m)}}
		\psfrag{N}[r][]{\textbf{(n)}}		
		\psfrag{g5}[c][c][0.7]{$\gamma=5$}
		\psfrag{opt}[c][c][0.7]{optimal}
    \psfrag{error}[c][t]{$\NRMSE(\est{f}_\sss{tiny})$}
    \psfrag{volumeError}[c][]{$\NRMSE(\est{f}^\sss{vol}_\sss{tiny})$}
    \psfrag{fraction}[c][][1.0]{$\Prob[C_\sss{tiny}\textrm{ visited}]$}
    \psfrag{gain alpha}[c][]{gain $\alpha$}
    \psfrag{weight w}[c][][0.9]{weight $w$ (equivalent to $\gamma$)}
    \psfrag{sample length}[c][]{sample length $n$}
    \psfrag{WRW sample length}[c][]{WRW sample length $n$}
\includegraphics[width=0.98\textwidth]{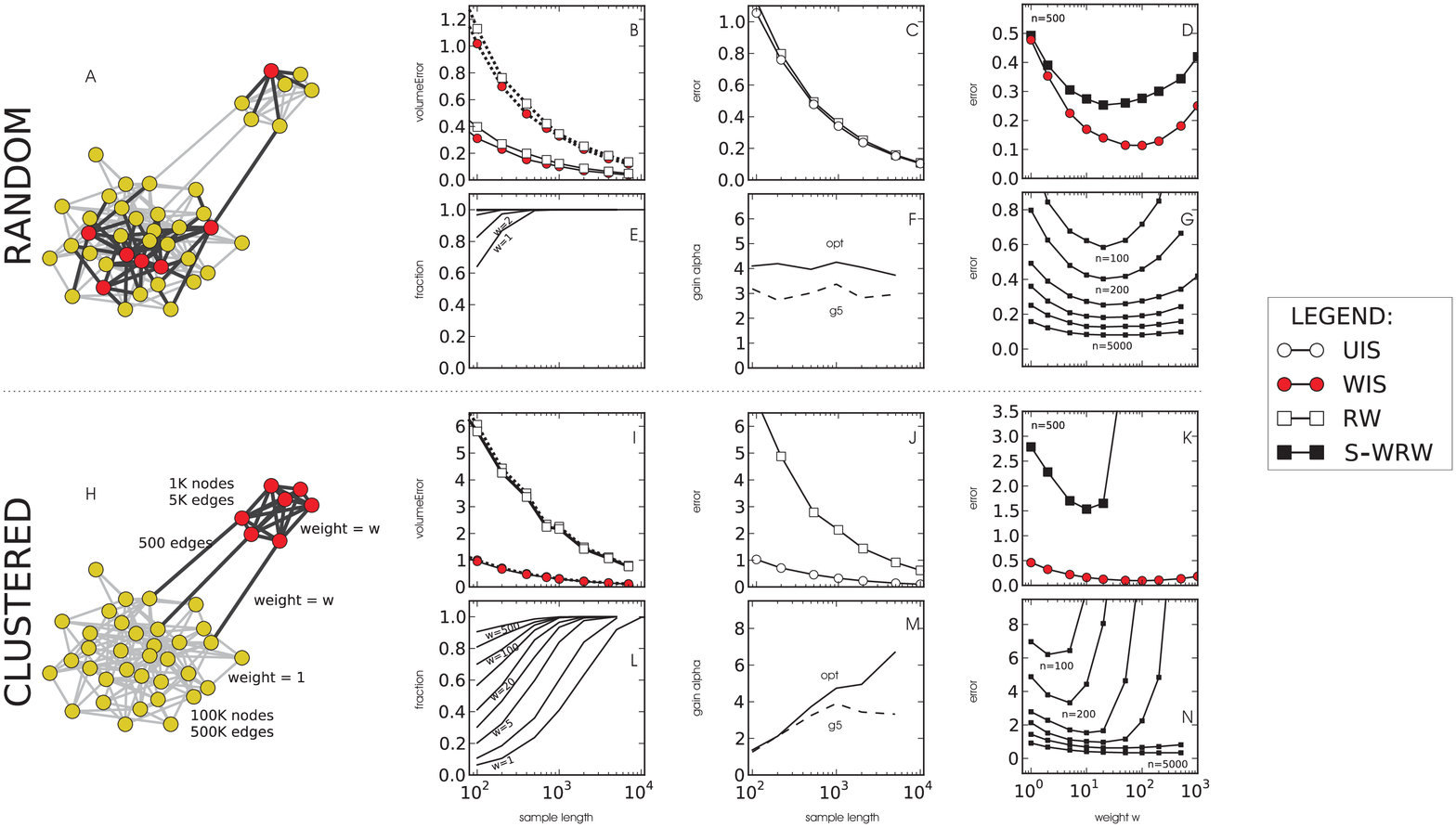}
\vspace{-5pt}
\caption{ RW and \heur under two scenarios: Random \textbf{(a-g)} and Clustered (h-n). 
In (b,i), we show error of two volume estimators: naive \eqn{\ref{vol c 1 RW}} (dotted) and neighbor-based \eqn{\ref{vol c 2 RW}} (plain). 
Next, we show error of size estimator as a function of $n$~(c,j) and $w$~(d,g,k,n); in the latter, UIS and RW correspond to WIS and \heur for $w\eq 1$. 
In (e,l), we show the empirical probability that \heur visits $C_\sss{tiny}$ at least once.
Finally, (f,m) is gain~$\alpha$ of \heur over RW under the optimal choice of $w$ (plain), and for fixed $\gamma\eq w\eq5$ (dashed). 
}
\label{fig:simulation_results}
\vspace{-19pt}
\end{center}
\end{figure*}

\section{Simulation results}\label{Simulation results}

The gain of our approach compared to RW comes from two main factors.
First, \heur avoids, to a large extent or completely, the nodes in~$C_\ominus$ that are irrelevant to our measurement.
This fact alone can bring an arbitrarily large improvement ($\frac{N}{N-|C_\ominus|}$ under WIS), especially when $C_\ominus$ is large compared to~$N$.
We demonstrate this in the Facebook measurements in Section~\ref{Applications}.
Second, we can better allocate samples among the relevant categories. This factor is observable in our Facebook measurements as well, but it is more difficult to evaluate due to the lack of ground-truth therein.
In this section, we evaluate the optimal allocation gain in a controlled simulation and we demonstrate some key insights.

\subsection{Setup}
We consider a graph~$G$ with 101K nodes and 505.5K edges organized in two densely (and randomly) connected communities\footnote{The term ``community'' refers to cluster and is defined purely based on topology. The term ``category'' is a property of a node and is independent of topology.} as shown in Fig.~\ref{fig:simulation_results}(h).

The nodes in~$G$ are partitioned into two node categories: $C_\sss{tiny}$ with 1K  nodes (dark red), and $C_\sss{big}$ with 100K nodes (light yellow). We consider two extreme scenarios of such a partition. The `random' scenario is purely random, as shown in Fig.~\ref{fig:simulation_results}(a). 
In contrast, under `clustered', categories $C_\sss{tiny}$ and $C_\sss{big}$ coincide with the existing communities in $G$, as shown in Fig.~\ref{fig:simulation_results}(h). It is arguably the worst case scenario for graph sampling by exploration.


We fix the edge weights of all internal edges in $C_\sss{big}$ to~1. All the remaining edges, \ie all edges incident on nodes in category~$C_\sss{tiny}$, have weight $w$ each, where $w\geq 1$ is a parameter. Note that this is equivalent to setting $\tilde{\w}_e(C_\sss{big})\eq1$, $\tilde{\w}_e(C_\sss{tiny})\eq w$, and `max' or `hybrid' conflict resolution. 


\subsection{Measurement objective and error metric}
We are mainly interested in measuring the relative sizes $f_\sss{tiny}$ and $f_\sss{big}$ of categories~$C_\sss{tiny}$ and~$C_\sss{big}$, respectively. 

We use Normalized Root Mean Square Error (\NRMSE) to assess the estimation error, defined as~\cite{Ribeiro2010}:
\begin{equation}\label{eq:NRMSE}
\NRMSE(\est{x}) = \frac{ \sqrt{\Mean\big[ (\est{x}-x)^2 \big] }}{x},
\end{equation}
where~$x$ is the real value and $\est{x}$ is the estimated one.

\subsection{Results}

\subsubsection{Estimating volumes is usually cheap}
\label{subsec:Estimating volumes is usually cheap}
The first step in \heur is obtaining category volume estimates $\est{f}^\sss{vol}_i$. 
We achieve it by running a short pilot RW and applying the estimator~\eqn{\ref{vol c 2 RW}}. We show $\NRMSE(\est{f}^\sss{vol}_\sss{tiny})$ as plain curves in~\Fig{fig:simulation_results}(b). 
This estimator takes advantage of the knowledge of the categories of the neighboring nodes, which makes it much more efficient than the naive estimator~\eqn{\ref{vol c 1 RW}} shown by dashed curves.
Moreover, the advantage of \eqn{\ref{vol c 2 RW}} over \eqn{\ref{vol c 1 RW}} grows with the graph density and the skewness of its degree distribution (not shown here). 

Note that under `random', RW and WIS (with the sampling probabilities of RW) are almost equally efficient.  
However, on the other extreme, \ie under the `clustered' scenario, the performance of RW becomes much worse and the advantage of \eqn{\ref{vol c 2 RW}} over \eqn{\ref{vol c 1 RW}} diminishes. 
This is because essentially all friends of a node from category~$C_i$ are in $C_i$ too, which reduces formula \eqn{\ref{vol c 2 RW}} to \eqn{\ref{vol c 1 RW}}.
Nevertheless, we show later in~\Sec{Robustness to gamma and volume estimation} that even severalfold volume estimation errors are likely not to affect significantly the results.

\subsubsection{Visiting the tiny category}
Fig.~\ref{fig:simulation_results}(e,l) presents the empirical probability $\Prob[C_\sss{tiny}\textrm{ visited}]$ that our walk visits at least one node from~$C_\sss{tiny}$. 
Of course, this probability grows with the sample length. 
However, the choice of weight~$w$ also helps in it. Indeed, WRW with $w>1$ is more likely to visit~$C_\sss{tiny}$  than RW ($w=1$, bottom line). 
This demonstrates the first advantage of introducing edge weights and WRW.

\subsubsection{Optimal $w$ and $\gamma$}

Let us now focus on the estimation error as a function of~$w$, shown in Fig.~\ref{fig:simulation_results}(d,k).
Interestingly, this error does not drop monotonically with~$w$ but follows a 'U' shaped function with a clear optimal value~$w^\opt$.

Under WIS, we have $w^\opt\simeq 100$, which confirms our findings in \Sec{subsec: Relative sizes of node categories}.
Indeed, according to \eqn{\ref{eq: Relative sizes of node categories}}, we need the same number of samples 
from the two categories, and thus $\w^\WIS(C_\sss{tiny})=\w^\WIS(C_\sss{big})$ (by~\eqn{\ref{w(C_i) propto n_i}}). 
By plugging this and $\vol(C_\sss{big})=100\cdot\vol(C_\sss{tiny})$ to \eqn{\ref{eq:w_e(C_i)}}, we finally obtain the WIS-optimal edge weights in $C_\sss{tiny}$, \ie
$w^\opt=\w_e(C_\sss{tiny}) = 100\cdot\w_e(C_\sss{big})=100$.\footnote{For simplicity, we ignored in this calculation the conflicts on the 500 edges between $C_\sss{big}$ and $C_\sss{tiny}$.}

In contrast, WRW is optimized for $w<100$. For the sample length $n\eq 500$ as in Fig.~\ref{fig:simulation_results}(d,k), the error is minimized already for $w^\opt\!\simeq\!20$ and increases for higher weights. 
This demonstrates the ``black hole'' effect discussed in~\Sec{subsec:Tiny and Unknown Categories}. 
It is much more pronounced in the `clustered' scenario, confirming our intuition that black-holes become a problem only in the presence of relatively isolated, tight communities.
Of course, the black hole effect diminishes with the sample length~$n$ (and completely vanishes for~$n\!\rightarrow\!\infty$), which can be observed in Fig.~\ref{fig:simulation_results}(g,n), especially in~(n).

In other words, the optimal assignment of edge weights (in relevant categories) under WRW lies somewhere between RW (all weights equal) and  WIS. 
In \heur, we control it by parameter~$\gamma$. 
In this example, we have $\gamma\equiv w$ for $\gamma\le 100$. Indeed, by combining \eqn{\ref{eq:w_e(C_i)}}, \eqn{\ref{eq:volC max}}, \eqn{\ref{eq:vol_min}}, $\w^\WIS(C_\sss{tiny})\eq\w^\WIS(C_\sss{big})$, we obtain
\begin{eqnarray}
\nonumber w &=& \frac{w}{1} \ =\  \frac{w_e(C_\sss{tiny})}{w_e(C_\sss{big})}\ =\ 
 \frac{\w^\WIS(C_\sss{tiny})/\lowtilde{\vol}(C_\sss{tiny})}{\w^\WIS(C_\sss{big})/\lowtilde{\vol}(C_\sss{big})}\\  
\nonumber &=& \frac{\lowtilde{\vol}(C_\sss{big})}{\lowtilde{\vol}(C_\sss{tiny})}\ 
  =\ \frac{\vol(C_\sss{big})}{\frac{1}{\gamma}\vol(C_\sss{big})}\ =\ \gamma.	
\end{eqnarray}
Consequently, the optimal setting of $\gamma$ is the same as $w^\opt$ discussed above.


\subsubsection{Gain $\alpha$}

The gain $\alpha$ of WIS over UIS is given by~\eqn{\ref{eq: Relative sizes of node categories}}.
In this case, we have $\alpha=(101K)^2\cdot (4\cdot1K\cdot 100K)^{-1} \simeq 25$.
Indeed, WIS with $n\eq500$ samples shown in Fig.~\ref{fig:simulation_results}(d) achieves $\NRMSE\!\simeq\!0.1$, which is the same as UIS of about $\alpha\eq25$ times more samples (see Fig.~\ref{fig:simulation_results}(c)).

This gain due to stratification is smaller for sampling by exploration: a 500-hop-long WRW with $w\!\simeq\!20$ yields the same error $\NRMSE\!\simeq\!0.3$ as a 2000-hop-long RW.
This means that WRW reduces the sampling cost by a factor of~$\alpha\simeq 4$. 
Fig.~\ref{fig:simulation_results}(f) shows that this gain does not vary much with the sampling length.
Under `clustered', both RW and WRW perform much worse. 
Nevertheless, Fig.~\ref{fig:simulation_results}(m) shows that also in this scenario WRW may significantly reduce the sampling cost, especially for longer samples.


It is worth noting that WRW can sometimes significantly outperform UIS. This is the case in  Fig.~\ref{fig:simulation_results}(d), where UIS is equivalent to WIS with $w\eq1$.
Because no walk can mix faster than UIS (that is independent and thus has perfect mixing), improving the mixing time alone~\cite{Boyd2004_mixing,Ribeiro2010,Ribeiro2010a,Avrachenkov2010} cannot achieve the potential gains of stratification, in general.

So far we focused on the smaller set $C_\sss{tiny}$ only.
When estimating the size of $C_\sss{big}$, all errors are much smaller, but we observe similar gain~$\alpha$.

\subsubsection{Robustness to $\gamma$ and volume estimation}
\label{Robustness to gamma and volume estimation}
The gain $\alpha$ shown above is calculated for the optimal choice of $w$, or, equivalently, $\gamma$. Of course, in practice it might be impossible to obtain this value. Fortunately, \heur is relatively robust to the choice of parameters. 
The dashed lines in Fig.~\ref{fig:simulation_results}(f,m) are calculated for~$\gamma$ fixed to~$\gamma\eq 5$, rather than optimized. 
Note that this value is often drastically smaller than the optimal one (\eg $w^\opt\!\simeq\!50$ for $n\eq5000$). 
Nevertheless, although the performance somewhat drops, \heur still reduces the sampling cost about three-fold.

This observation also addresses potential concerns one might have regarding the category volume estimation error (see \Sec{subsec:Estimation of Category Volumes}). 
Indeed, setting $\gamma\eq 5$ means that every category~$C_i$ with volume estimated at $\lowhat{\vol}(C_i)\le\frac{1}{5}\vol(C_\sss{big})$ is treated the same. 
In Fig.~\ref{fig:simulation_results}(f), the volume of $C_\sss{tiny}$ would have to be overestimated by more than 20 times in order to affect the edge weight setting and thus the results. We have seen in~\Sec{subsec:Estimating volumes is usually cheap} that this is very unlikely, even under smallest sample lengths and most adversarial scenarios. 




\subsection{Summary}

WRW brings two types of benefits (i) avoiding irrelevant nodes $C_\ominus$ and (ii) carefully allocating samples between relevant categories of different sizes. Even when $C_\ominus\eq\emptyset$, WRW can still reduce the sampling cost by 75\%.
This second benefit is more difficult to achieve when the categories form strong and tight communities, which leads to the ``black hole''' effect. We should then choose smaller, more conservative values of $\gamma$ in \heur, which translate into smaller $w$ in our example.
In contrast, under a looser community structure this problem disappears and WRW is closer to WIS.

\section{Implementation in Facebook}
\label{Applications}

\begin{figure*}[!ht]
    \psfrag{ra}[l][][0.5]{RW}
    \psfrag{ge}[l][][0.5]{\heur, geometric}
    \psfrag{ar}[l][][0.5]{\heur, arithmetic}
    \psfrag{hy}[l][][0.5]{\heur, hybrid}
    \psfrag{ra2}[l][][0.5]{RW}
    \psfrag{ge2}[l][][0.5]{geometric}
    \psfrag{ar2}[l][][0.5]{arithmetic}
    \psfrag{hy2}[l][][0.5]{hybrid}
    
    \psfrag{SZ}[c][][0.7]{Relative size $\est{f}_i$}
    \psfrag{NT}[c][][0.7]{Number of samples $n_i$}
    \psfrag{NRMSE}[c][][0.7]{Average $\NRMSE(\est{f}_i)$}
    \psfrag{RK}[c][][0.7]{Ranked colleges}
    \psfrag{NW}[c][][0.7]{Number of samples~$n$}
    \psfrag{ten}[c][][0.55]{$\bf{\times} 10^{-6}$}
    
    \psfrag{A}[c][]{\textbf{(a)}}
		\psfrag{B}[c][]{\textbf{(b)}}
		\psfrag{C}[c][]{\textbf{(c)}}
		\psfrag{D}[c][]{\textbf{(d)}}
		\psfrag{E}[c][]{\textbf{(e)}}
		\psfrag{F}[c][]{\textbf{(f)}}
		
    \includegraphics[width=1.\textwidth]{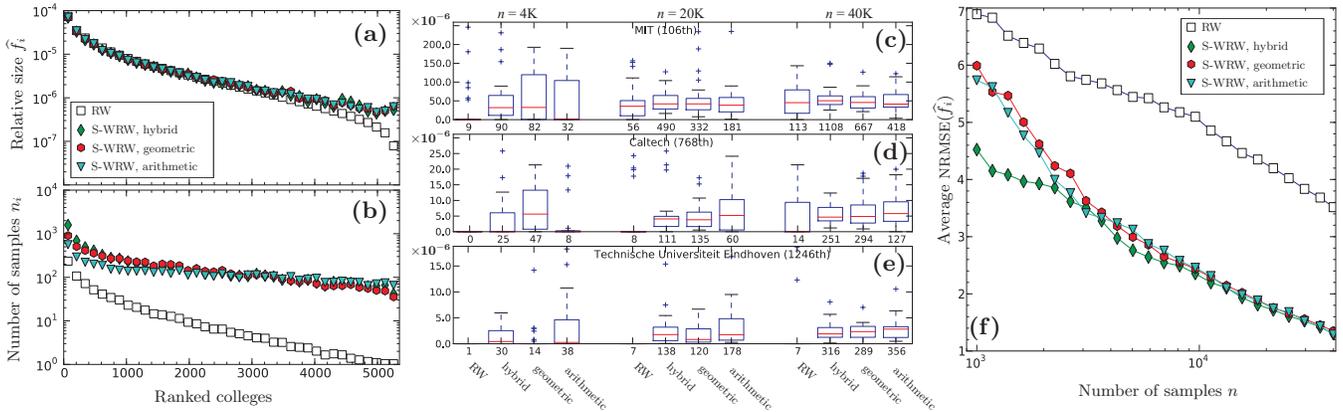}
    \vspace{-0.8cm}
\caption{
5331 colleges discovered and ranked by RW. 
(a)~Estimated relative college sizes~$\est{f}_i$. (b)~Absolute number of user samples per college.
(c-e)~25 estimates of size~$\est{f}_i$ for three different colleges and sample lengths~$n$.  
(f)~Average \NRMSE \ of college size estimation.
\quad Results in (a,b,f) are binned. 
}
\label{fig:combo_Facebook}
\vspace{-0.2cm}
\end{figure*}


As a concrete application, we apply \heur to measure the Facebook social graph, which is our motivating and canonical example. We also note that it is an undirected and can also be considered a static graph, for all practical purposes in this study.\footnote{The Facebook characteristics do change but in time scales much longer than the 3-day duration of our crawls. Websites such as Facebook statistics, Alexa etc show that the number of Facebook users is growing with rate 0.1-0.2\% per day.}
In Facebook, every user may declare herself a member of a college\footnote{There also exist categories other than colleges, namely ``work'' and ``high school''. Facebook
requires a valid category-specific email for verification.}
he/she attends. This membership information is publicly available by default and allows us to answer some interesting questions.
For example, how do the college networks (or ``colleges'' for short) compare with respect to their sizes? 
What is the college-to-college friendship graph?
In order to answer these questions, we have to collect many college user samples, preferably evenly distributed between colleges. This is the main goal of this section.


\subsection{Measurement Setup}
\label{subsec:Measurement Setup}

By default, every Facebook user can see the basic information on any other user, including the name, photo, and a list of friends together with their college memberships (if any).
We developed a high performance multi-threaded crawler to explore Facebook's social graph by scraping this web interface.

To  make informed decision for the parameters of \heur, we first ran a short pilot RW (see \Sec{subsec:Estimation of Category Volumes}) with a total of~$65K$ samples (which is only 6.5\% of the length of the main \heur sample). 
Although our pilot walk visited only 2000 colleges, it estimated the relative volumes~$f^\vol_i$ for about 9500 colleges discovered among friends of sampled users, as discussed in~\Sec{subsec:Cost of pilot RW}.
%
In~\Fig{fig:pilot RW}(a), we show that the neighbor-based estimator~\eqn{\ref{vol c 2 RW}} greatly outperforms the naive estimator~\eqn{\ref{vol c 1 RW}}. 
These volumes cover several decades. 
Because colleges with only a few tens of users are not of our interest, we set the maximal resolution to~$\gamma\eq 1000$ (see the discussion in~\Sec{subsec:Setting the parameters}). 
Finally, because the college students looked very well interconnected in our pilot RW, we set the desired fraction of irrelevant nodes to a small number~$\tilde{f}_\ominus\eq1\%$. 


In the main measurement phase, we collected three \heur crawls, each with different edge weight conflict resolution (hybrid, geometric, and arithmetic), and one simple RW crawl as a baseline comparison~(Table \ref{tab:fb_datasets}). 
For each crawl type we collected 1~million \emph{unique} users. 
Some of them are sampled multiple times (at no additional bandwidth cost), which results in higher total number of samples in the second row of~Table \ref{tab:fb_datasets}. 
Our crawls were performed on Oct. 16-19 2010, and are available at~\cite{wrw-dataset}.

\subsection{Results: RW vs. \heur}
\label{subsec:FB results}

\begin{table}
\flushleft
\small
\begin{tabular}{@{}l|c|c|c|c@{}}
 							                &  RW        &   \multicolumn{3}{c}{\heur} \\ \cline{3-5} 
 							                &          &   Hybrid      & Geometric    & Arithmetic  \\
\hline 
Unique samples                       &  1,000K     &   1,000K      &   1,000K     &    1,000K  \\
Total samples                         &  1,016K      &   1,263K     &    1,228K     &    1,237K  \\
College samples                      &  9\%  &   86\% &  79\% &    58\%    \\
Unique Colleges                    &  5,331      &   9,014       &    8,994      &   10,439
\end{tabular}
\caption{Overview of collected Facebook datasets.
}
\label{tab:fb_datasets}
\vspace{-5pt}
\end{table}

\subsubsection{Avoiding irrelevant categories}

Only 9\% of the RW's samples come from colleges, which means that the vast majority of sampling effort is wasted.
In contrast, the \heur crawls achieved 6-10 better efficiency, collecting 86\% (hybrid), 79\% (geometric) and 58\% (arithmetic) samples from colleges.
Note that these values are significantly lower than the target 99\% suggested by our choice of $\tilde{f}_\ominus\eq1\%$, and that \heur hybrid reaches the highest number. This is in agreement with our discussion in \Sec{subsec:Step 5: Edge Conflict Resolution}.
Finally, we also note that \heur crawls discovered $1.6-1.9$ times more unique colleges than RW.

It might seem surprising that RW samples colleges in 9\% of cases while only 3.5\% of Facebook users belong to colleges. 
This can be explained by looking at the last rows of Table~\ref{tab:fb_datasets}. 
Indeed, the college users have on average three times more Facebook friends than average users, 
and therefore they attract RW approximately three times more often.

\subsubsection{Stratification}

The advantage of \heur over RW does not lie exclusively in avoiding the nodes in the irrelevant category~$C_\ominus$. \heur can also over-sample small categories (here colleges) at the cost of under-sampling large ones (which are very well sampled anyway). This feature becomes important especially when the category sizes differ significantly, which is the case in Facebook. Indeed, Fig.~\ref{fig:combo_Facebook}(a) shows that college sizes exhibit great heterogeneity.
For a fair comparison, we only include the 5,331 colleges discovered by RW. (In fact, this filtering actually gives preference to RW. 
\heur crawls discovered many more colleges that we do not show in this figure.)
They span more than two orders of magnitude and follow a heavily skewed distribution (not shown here).

Fig.~\ref{fig:combo_Facebook}(b) confirms that \heur successfully oversamples the small colleges. Indeed, the number of \heur samples per college is almost constant (roughly around 100). In contrast, the number of RW samples follows closely the college size, which results in dramatic 100-fold differences between RW and \heur for smaller colleges.



\subsubsection{College size estimation}
With more samples per college, we naturally expect a better estimation accuracy under \heur. We demonstrate it for three colleges of different sizes (in terms of the number of Facebook users): MIT (large), Caltech (medium), and  Eindhoven University of Technology (small).
Each boxplot in Fig.~\ref{fig:combo_Facebook}(c-e) is generated based on 25 independent college size estimates~$\est{f}_i$ that come from 
walks of length $n\eq4$K (left), 20K (middle), and 40K (right) samples each. 
%
For the three studied colleges, RW fails to produce reliable estimates in all cases except for MIT (largest college) under the two longest crawls. Similar results hold for the overwhelming majority of middle-sized and small colleges. The underlying reason is the very small number of samples collected by RW in these colleges, averaging at below 1 sample per walk. In contrast, the three \heur crawls contain typically 5-50 times more samples than RW (in agreement with Fig.~\ref{fig:combo_Facebook}(b)), and produce much more reliable estimates.

Finally, we aggregate the results over all colleges and compute the gain~$\alpha$ of \heur over RW. 
We calculate the error $\NRMSE(\est{f}_i)$ by taking as our ``ground truth''~$f_i$ the grand average of $\est{f}_i$ values over all samples collected via all full-length walks and crawl types.
Fig.~\ref{fig:combo_Facebook}(f) presents $\NRMSE(\est{f}_i)$ averaged over all 5,331 colleges discovered by RW, as a function of walk length~$n$. 
As expected, for all crawl types the error decreases with~$n$. 
However, there is a consistent large gap between RW and all three versions of \heur.
RW needs 13-15 times more samples than \heur in order to achieve the same error.


\begin{figure}[t]
\psfrag{ra}[l][][0.5]{$RW_pilot$}
    \psfrag{rwrwrwrwrwrwrw}[c][][0.6]{\hspace{-1.4cm}pilot RW}
    \psfrag{huhuhuhuhuhuhu}[c][][0.6]{\hspace{0.0cm}\heur, $\gamma=100$ }
    \psfrag{ththththththth}[c][][0.6]{\hspace{0.2cm}\heur, $\gamma=1000$}
    \psfrag{starstarst}[c][][0.6]{\hspace{-0.4cm}neighbor}
    \psfrag{inducedind}[c][][0.6]{\hspace{-1.0cm}naive}
    \psfrag{size}[c][b][0.75]{Relative size $\est{f}_i$}
    \psfrag{Volume}[c][b][0.75]{Relative volume $\est{f}^\vol_i$}
    \psfrag{num}[c][t][0.75]{Number of samples $n_i$}
    \psfrag{NRMSE}[c][t][0.75]{$\NRMSE(\est{f}^\vol_i)$}
\centering
\includegraphics[width=0.5\textwidth]{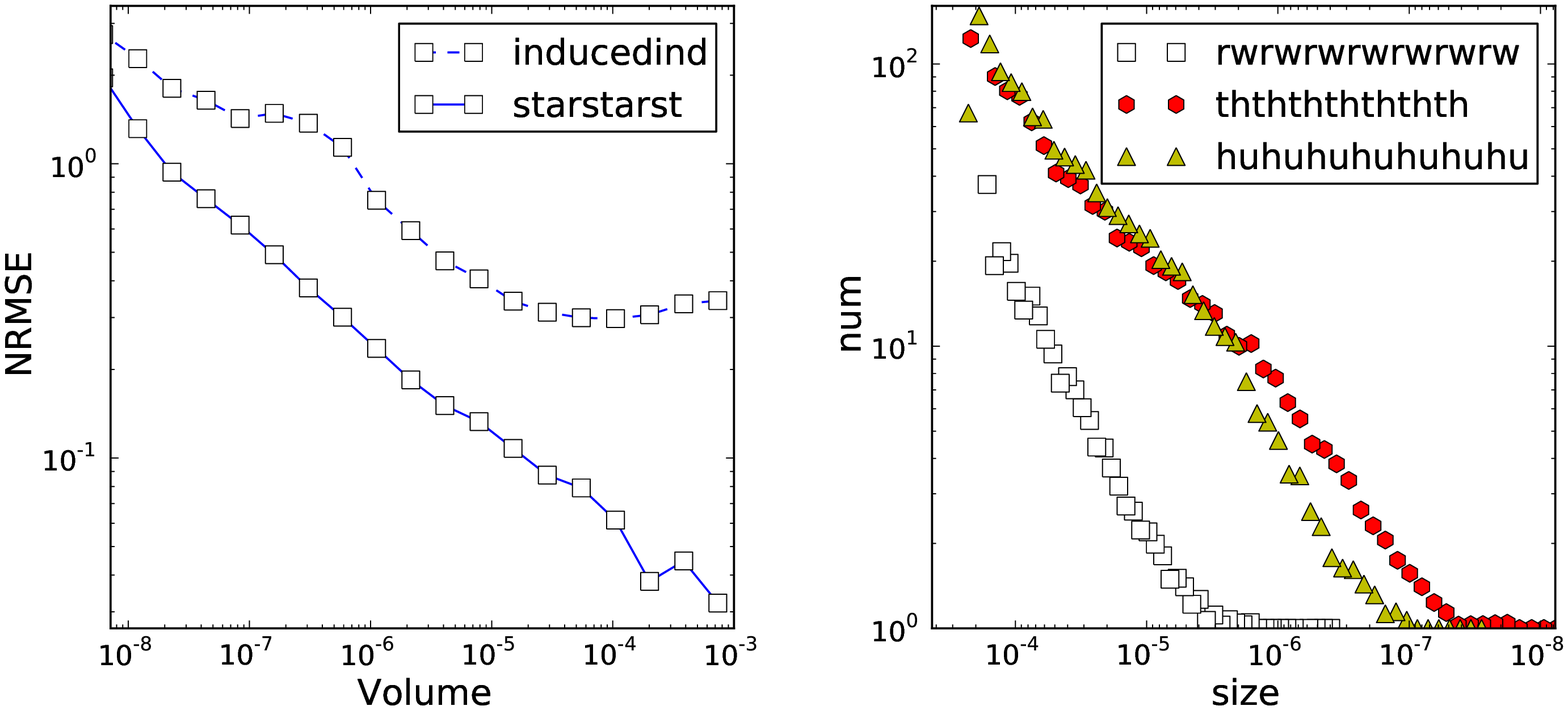}
\caption{Facebook: Pilot RW and other walks of the same length~$n\eq65K$. 
\ \textbf{(a)}~The performance of the neighbor-based volume estimator~\eqn{\ref{vol c 2 RW}} (plain line) and the naive one~\eqn{\ref{vol c 1 RW}} (dashed line). As `ground-truth' we used~$f^\vol_i$ calculated for all 4$\times$1M collected samples. 
\ \textbf{(b)}~The effect of the choice of $\gamma$.} 
\label{fig:pilot RW}
\vspace{-15pt}
\end{figure}

\subsubsection{The effect of the choice of $\gamma$}
\label{subsec:Robustness to the choice of gamma}

Recall that in all the \heur results described above, we used the resolution~$\gamma\eq 1000$. 
In order to check how sensitive the results are to the choice of this parameter, we also tried a (shorter) \heur run with $\gamma\eq100$, \ie ten times smaller. In~\Fig{fig:pilot RW}(b), we see that the number of samples collected in the smallest colleges is smaller under $\gamma\eq100$ than under $\gamma\eq1000$. In fact, the two curves diverge for colleges about 100 times smaller than the biggest college, \ie exactly at the maximal resolution $\gamma\eq100$. 

In any case, both settings of $\gamma$ perform orders of magnitude better than RW of the same length.

\subsection{Summary}
Only about 3.5\% of 500M Facebook users are college members.
There are more than 10K colleges and they greatly vary in size, ranging from 50 (or fewer) to 50K members (we aggregate students, alumni and staff). In this setting, state-of-the-art sampling methods such as RW are bound to perform poorly. Indeed, UIS, \ie an idealized version of RW, with as many as 1M samples will collect only one sample from size-500 college, on average. Even if we could magically sample directly only from colleges, we would typically collect fewer than 30 samples per size-500 college.

\heur solves these problems. 
We showed that \heur of the same length collects typically about 100 samples per size-500 college. As a result, \heur outperforms RW by $\alpha = 13-15$ times or $\alpha = 12-14$ times if we also consider the 6.5\% overhead from the initial pilot RW.
This huge gain can be decomposed into two factors, say $\alpha = \alpha_1 \cdot \alpha_2$, as we proposed in~\eqn{\ref{eq:alpha hybrid}}.
Factor $\alpha_1 \simeq 8$ can be attributed to a about 8 times higher fraction of college samples in \heur compared to RW.
Factor $\alpha_2 \simeq 1.5$ is due to over-sampling smaller networks, \ie by applying stratified sampling.

Another important observation is that \heur is robust to the way we resolve target edge weight conflicts in \Sec{subsec:Step 5: Edge Conflict Resolution}. The differences between the three \heur implementations are minor - it is the application of \eqn{\ref{eq:tilde w_e(C_i)}} that brings most of the benefit.


\section{Related work}\label{sec:related}

\textbf{Graph Sampling by Exploration.} Early crawling of  P2P, OSN and WWW typically used graph traversals, mainly BFS~\cite{Ahn-WWW-07,Mislove2007,MisloveWosn08,Wilson09,Mohaisen2010} and its variants.
However, incomplete BFS introduces bias towards high-degree nodes that is unknown and thus impossible to correct in general graphs ~\cite{Achlioptas05_On_the_bias_of_traceroute_sampling,Lee-Phys-Rev-06,snowball-bias,Gjoka2010,Kurant2010}. 
Later studies followed a more principled approach based on random walks (RW) \cite{Lovasz93,AldousBookInPreparation}. The Metropolis-Hasting RW (MHRW) ~\cite{Metropolis1953,mcmc-book} removes the bias during the walk; it has been used to sample P2P networks~\cite{Stutzbach2006-unbiased-p2p,Rasti09-RDS} and OSNs~\cite{Gjoka2010}.
Alternatively, we can use RW, whose bias is known and can be corrected for \cite{Heckathorn97_RDS_introduction,Salganik2004}, thus leading to a re-weighted RW~\cite{Rasti09-RDS,Gjoka2010}.
RW was also used to sample Web~\cite{Henzinger2000},
P2P networks~\cite{Stutzbach2006-unbiased-p2p,Rasti09-RDS, Gkantsidis2004}, 
OSNs \cite{Twitter08,Rasti2008,Gjoka2010,Mohaisen2010}, and other large graphs \cite{Leskovec2006_sampling_from_large_graphs}.
It was empirically shown in \cite{Rasti09-RDS,Gjoka2010} that RW outperforms MHRW in measurement accuracy. Therefore, RW can be considered as the state-of-the-art.

Random walks have also been used to sample \emph{dynamic graphs}~\cite{Stutzbach2006-unbiased-p2p, Rasti09-RDS, Willinger09-OSN_Research}, which are outside the scope of this paper.

\smallskip
\textbf{Fast Mixing Markov Chains.} The mixing time of a random walk determines the efficiency of the sampling. 
On the practical side, the mixing time of RW in many OSNs was found larger than commonly believed~\cite{Mohaisen2010}.  
Multiple dependent random walks~\cite{Ribeiro2010}
have been used to sample disconnected and loosely connected graphs.
Random walks with jumps have been  used to sample large graphs in~\cite{Ribeiro2010a,Avrachenkov2010} and in~\cite{Leskovec2006_sampling_from_large_graphs}. 
All the above methods treat all nodes with equal importance, which is orthogonal to our technique. 

On the theoretical side, in~\cite{Boyd2004_mixing}, the authors propose a method to set edge weights that achieve the fastest mixing WRW for a given target stationary distribution. 
This technique, although related, is not applicable in our context. 
First, \cite{Boyd2004_mixing}~requires the knowledge of the graph, which makes it inapplicable to~$G$, 
yet possibly feasible in~$G^C$ (after estimating some limited information about $G^C$ as in \Sec{subsec:Estimation of Category Volumes}).
In the latter case, however, even given a perfect knowledge of~$G^C$, \cite{Boyd2004_mixing}~often assigns weight~0 to some self-loops, 
which likely makes the underlying graph~$G$ disconnected. 
Finally, and most importantly, \cite{Boyd2004_mixing} takes a target stationary distribution as input. By taking $\w^\WIS$, we will face exactly the same problems of potentially poor convergence (\Sec{subsec:Step 3: Irrelevant Categories}) and ``black holes'' (\Sec{subsec:Tiny and Unknown Categories}) as we addressed by \heur.


\smallskip
\textbf{Stratified Sampling.} Our approach builds on~\emph{stratified sampling}~\cite{Neyman1934}, a widely used technique in statistics; see
\cite{Cochran1977, Lohr2009} for a good introduction. 

A related work in a different networking problem is \cite{Duffield2005}, where threshold sampling is used to vary sampling probabilities of network traffic flows and
 estimate their volume. 

\smallskip
\textbf{Weighted Random Walks for Sampling.} Random walks on graphs with weighted edges, or equivalently reversible Markov chains \cite{Lovasz93,AldousBookInPreparation}, are well studied and heavily used in Monte Carlo Markov Chain simulations \cite{mcmc-book} to sample a state space with a specified probability distribution. However, to the best of our knowledge, WRWs have not been designed explicitly for measurements of real online systems. In the context of sampling OSNs, the closest works are \cite{Ribeiro2010a,Avrachenkov2010}. 
Technically speaking, they use WRW. But they set as their only objective the minimization of the mixing time, which makes them orthogonal and complementary to our approach, as we discussed above.

Very recent applications of weighted random walks in online social networks include~\cite{Backstrom2011, Backstrom2011_WWW}. 
\cite{Backstrom2011} uses WRW in the context of link prediction. The authors employ supervised learning techniques to set the edge weights, with the goal of increasing the probability of visiting nodes that are more likely to receive new links. 
\cite{Backstrom2011_WWW} introduces WRW-based methods to generate samples of nodes that are internally well-connected but also approximately uniform over the population. 
In both these papers, WRW is used to predict/extract something from a known graph. 
In contrast, we use WRW to estimate features of an unknown graph.


In the context of World Wide Web crawling, \emph{focused crawling} techniques \cite{Chakrabarti1999_focused,Diligenti2000_focused} have been introduced to follow web pages of specified interest and to avoid the irrelevant pages. This is achieved by performing a BFS type of sample, except that instead of fifo queue they use a priority queue weighted by the page relevancy. In our context, such an approach suffers from the same problems as regular BFS: (i)~collected samples strongly depend on the starting point, and (ii)~we are not able to unbias the sample.

\section{Conclusion}

We introduced Stratified Weighted Random Walk (\heur) - an efficient way to sample large, static, undirected graphs
via crawling and using minimal information. \heur performs a weighted random walk on the graph with weights determined by the estimation problem.
We apply our approach to measure the Facebook social graph, and we show that \heur greatly outperforms
the state-of-art sampling technique, namely the simple re-weighted random walk. 

There are several directions for future work. 
First, \heur is currently an intuitive and efficient heuristic; in future work, we plan to investigate 
the optimal solution to problems identified in this paper and compare against or improve \heur. 
Second, it may be possible to combine these ideas  with 
existing orthogonal techniques, some of which have been reviewed in Related Work, to further improve performance. 
Finally, we are interested in extending our techniques to dynamic graphs and non-stratified data.




\section*{Appendix A: Achieving Arbitrary Node Weights} \label{subsec:Proof}

Achieving arbitrary node weights by setting the edge weights in a graph~$G=(V,E)$ is sometimes impossible. For example, for a graph that is a path consisting of two nodes ($v_1 - v_2$),
it is impossible to achieve $\w(v_1)\neq\w(v_2)$.
However, it is always possible to do so, if there are self loops in each node.

\begin{observation}\label{theorem:edge_weigh_assignment_feasibility}
For any undirected graph $G=(V,E)$ with a self-loop $\{v,v\}$ at every node $v\in V$, we can achieve an arbitrary distribution of node weights $\w(v)>0,\ v\in V$, by appropriate choice of edge weights $\w(u,v)\!>\!0,\ \{u,v\}\!\in\!E$.
\end{observation}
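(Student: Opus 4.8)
The plan is to give an explicit construction of the edge weights rather than an abstract existence argument; this has the advantage of automatically guaranteeing the strict positivity required of every $\w(u,v)$. Fix the target node weights $\w(v)>0$, $v\in V$. For a node $v$, let $d(v)$ denote the number of \emph{non-loop} edges incident on $v$. The whole idea is that the self-loop $\{v,v\}$ enters only the weight $\w(v)$ (via \eqn{\ref{eq:w_u}}) and no other node weight, so once the non-loop edges are fixed, the loop at $v$ can be used to ``top up'' $\w(v)$ to its prescribed value, independently of every other node.

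Carrying this out, I would first choose a single $\varepsilon>0$ small enough that $\varepsilon\, d(v) < \w(v)$ for every $v$ with $d(v)\ge 1$; such an $\varepsilon$ exists because $V$ is finite and all the $\w(v)$ are strictly positive (if $d(v)=0$ for all $v$ the condition is vacuous and any $\varepsilon$ works). Assign this common weight $\varepsilon$ to every non-loop edge of $G$. Then, for each $v\in V$, set the self-loop weight $\w(v,v) = \w(v) - \varepsilon\, d(v)$, which is strictly positive by the choice of $\varepsilon$. (If one adopts the convention that a self-loop contributes twice to the weight of its endpoint, simply halve this value; nothing else changes.)

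It then remains only to verify, directly from \eqn{\ref{eq:w_u}}, that with this assignment the weight accumulated at each $v$ equals $\varepsilon\, d(v) + \w(v,v) = \w(v)$ as desired, and that every edge --- loop or not --- has received a strictly positive weight. The main (and only) point requiring a little care is the bookkeeping around self-loops: fixing the convention for how $\{v,v\}$ contributes to $\w(v)$, and noting that finiteness of $V$ lets one $\varepsilon$ serve all nodes simultaneously. There is no genuine obstacle here; the two-node path example preceding the statement shows precisely why the self-loop hypothesis is needed (without it the two node weights are forced equal), and the construction above shows it is also sufficient.
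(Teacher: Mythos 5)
Your proposal is correct and matches the paper's argument in essence: the paper also assigns one small common weight to all non-loop edges (specifically $\w_{\min}/N$, a concrete instance of your $\varepsilon$) and then uses each self-loop to top up $\w(v)$, with the same bookkeeping about how the loop contributes to the node weight (their factor $\tfrac12$ is your ``halve it'' remark). No substantive difference.
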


\begin{proof}
Denote by $\w_{\min}$ the smallest of all target node weights $\w(v)$. Set $\w(u,v)=\w_{\min}/N$ for all non self-loop edges (i.e., where $u\neq v$).
Now, for every self-loop $\{v,v\}\in E$ set
$$\w(v,v)\ \ =\ \ \frac{1}{2}\left(\w(v)-\frac{\w_{\min}}{N} \cdot (\deg(v)\!-\!2)\right).$$
It is easy to check that, because there are exactly $\deg(v)\!-\!2$ non self-loop edges incident on $v$, every node~$v\in V$ will achieve the target weight $\w(v)$. Moreover, the definition of $\w_{\min}$ guarantees that $\w(v,v)>0$ for every $v\in V$.
\end{proof}

\section*{Appendix B: Estimating Category Volumes}\label{subsec: Estimating volume} 

In this section, we derive efficient estimators of the volume ratio $\est{f}^\sss{vol}_C = \frac{\vol(C)}{\vol(V)}$. 
Recall that $S\subset V$ denotes an independent sample of nodes in $G$, with replacement.

\smallskip
\noindent\textbf{Node sampling}\\
If $S$ is a uniform sample UIS, then we can write 
\begin{equation}\label{vol c 1 UIS}
	\est{f}^\sss{vol}_C\ = \ \frac{\sum_{v \in S} \deg(v)\cdot 1_{\{v\in C\}}}{\sum_{v \in S} \deg(v)},
\end{equation}
which is a straightforward application of the classic ratio estimator~\cite{Lohr2009}.

In the more general case, when~$S$ is selected using WIS, then 
we have to correct for the linear bias towards nodes of higher weights $\w()$, as follows:
\begin{eqnarray}
\label{vol c 1 WIS}  
\est{f}^\sss{vol}_C&=&  \frac{\sum_{v \in S} \deg(v)\cdot 1_{\{v\in C\}}/\w(v)}{\sum_{v \in S} \deg(v)/\w(v)}.
\end{eqnarray}
In particular, if $\w(v)\sim\deg(v)$, then
\begin{eqnarray}
\label{vol c 1 RW}  
\est{f}^\sss{vol}_C & = & \frac{1}{n} \cdot  \sum_{v \in S} 1_{\{v\in C\}}.
\end{eqnarray}
%
%

\smallskip
\noindent\textbf{Star sampling}\\
Another approach is to focus on the set of all neighbors $\mathcal{N}(S)$ of sampled nodes (with repetitions) rather than on $S$ itself, \ie to use `star sampling'~\cite{Kolaczyk2009}.
The probability that a node~$v$ is a neighbor of a node sampled from $V$ by UIS is
$$\sum_{u\in V} \frac{1}{N}\cdot 1_{\{v\in \mathcal{N}(u)\}}\ \ =\ \ \frac{\deg(v)}{N}.$$
Consequently, the nodes in $\mathcal{N}(S)$ are asymptotically equivalent to nodes drawn with probabilities linearly proportional to node degrees. By applying \eqn{\ref{vol c 1 RW}} to $\mathcal{N}(S)$, we obtain\footnote{As a side note, observe that formula \eqn{\ref{vol c 2 UIS}} generalizes the ``scale-up method'' \cite{Bernard2010} used in social sciences to estimate the size (here $|C|$) of hidden populations (\eg of drug addicts). Indeed, if we assume that the average node degree in $V$ is the same as in $C$, then $\vol(C)/\vol(V) = |C|/N$, which reduces \eqn{\ref{vol c 1 RW}} to the core formula of the scale-up method.}
\begin{equation}\label{vol c 2 UIS}
	\est{f}^\sss{vol}_C \ \ =\  \ \frac{1}{\vol(S)}  \sum_{u\in S}\sum_{v \in \mathcal{N}(u)}\!\! 1_{\{v\in C\}},
\end{equation}
where we used $|\mathcal{N}(S)|=\sum_{u\in S}\deg(u)=\vol(S)$.

In the more general case, when~$S$ is selected using WIS, then we correct for the linear bias towards nodes of higher weights $\w()$, as follows:
\begin{equation}\label{vol c 2 WIS}
	\est{f}^\sss{vol}_C\ \ =\  \ \frac{1}{\displaystyle\sum_{u\in S} \frac{\deg(u)}{\w(u)}}  \sum_{u\in S} \left(\frac{1}{\w(u)}\sum_{v \in \mathcal{N}(u)}\!\! 1_{\{v\in C\}}\right).
\end{equation}
In particular, if $\w(v)\sim\deg(v)$, then
\begin{equation}\label{vol c 2 RW}
	\est{f}^\sss{vol}_C\ \ =\  \ \frac{1}{n}  \sum_{u\in S}\left(\frac{1}{\deg(u)}\sum_{v \in \mathcal{N}(u)}\!\! 1_{\{v\in C\}}\right).
\end{equation}

\smallskip
Note that for every sampled node $v\in S$, the formulas \eqn{\ref{vol c 2 UIS}-\ref{vol c 2 RW}} exploit all the $\deg(v)$ neighbors of~$v$, whereas \eqn{\ref{vol c 1 UIS}-\ref{vol c 1 RW}} rely on one node per sample only. Not surprisingly, \eqn{\ref{vol c 2 UIS}-\ref{vol c 2 RW}} performed much better in all our simulations and implementations.

%
%

\section*{Appendix C: Relative sizes of node categories} 

Consider a scenario with only two node categories, i.e., $\mathcal{C} = \{C_1, C_2\}$.
Denote $f_1 = |C_1|/N$ and $f_2 = |C_2|/N$. The goal is to estimate $f_1$ and $f_2$ based on the collected sample~$S$.


\paragraph{UIS - Uniform independence sampling}
Under UIS, the number $X_1$ of times we select a node from~$C_1$ among~$n$ attempts follows the Binomial distribution $X_1 = Binom(f_1, n)$. Therefore, we can estimate $f_1$ as
\begin{equation}\label{eq:UIS_f_1_est}
	\hat{f}^\UIS_1\ =\ \frac{X_1}{n}  \qquad \textrm{ with } \qquad \Var(\hat{f}^\UIS_1)\ =\ \frac{f_1 f_2}{n}.
\end{equation}

\paragraph{WIS - Weighted independence sampling}

In contrast, under WIS, at every iteration the probability $\pi(v)$ of selecting a node $v$ is:
$$
\pi(v) = \left\{  \begin{array}{rl}
         \pi_1 = \frac{1}{N} \cdot\frac{w_1}{w_1f_1 + w_2f_2} & \textrm{ if } v\in C_1, \textrm{ and } \\
         \pi_2 = \frac{1}{N} \cdot\frac{w_2}{w_1f_1 + w_2f_2} & \textrm{ if } v\in C_2,
               \end{array}\right.
$$
where $w_1$ and $w_2$ are the weights $\w(v)$ of nodes in $C_1$ and $C_2$, respectively.

By applying the Hansen-Hurwitz estimator (separately for nominator and denominator), we obtain
\begin{eqnarray}\label{eq:WIS_f_1_est_naive}
\nonumber    \hat{f}^\WIS_1 &=& \frac{|\hat{C}_1|}{\hat{N}}\ =\  \frac{\sum_{v\in S} 1_{v\in C_1 }\,/\,\pi(v)}{ \sum_{v\in S} 1\,/\,\pi(v) } \\
\nonumber     &=& \frac{X_1\,/\,\pi_1}{ X_1\,/\,\pi_1\ +\ (n-X_1)\,/\,\pi_2 }\  \\
\nonumber     &=& \frac{X_1\cdot \pi_2}{X_1(\pi_2-\pi_1) + n\cdot\pi_1}\ \\
\label{eq:WIS_f_1_est}     &=& \frac{X_1\cdot \w_2}{X_1(\w_2-\w_1) + n\cdot\w_1},
\end{eqnarray}
where $X_1$ is the number of samples taken from $C_1$. Note, that to calculate $\hat{f}^\WIS_1$ we only need values $w_1$ and $w_2$, which are set by us and thus known.

Computing the variance of $\hat{f}^\WIS_1$ is a bit more challenging. We use the second-order Taylor expansions (the 'Delta method') to approximate it as follows:
\begin{eqnarray}
\nonumber \frac{\partial\hat{f}^\WIS_1}{\partial X_1} &=& \frac{nw_1w_2}{ ((w_2-w_1)X_1 +nw_1)^2 }, \quad\textrm{ and}\\
\nonumber  \Var(\hat{f}^\WIS_1) &\cong & \left(\frac{\partial\hat{f}^\WIS_1}{\partial X_1}\big(\Mean(X_1)\big)\right)^2 \Var(X_1) \\
\label{eq:WIS_f_1_var}   &=&\back\big(\ldots\big)= \frac{f_1f_2}{nw_1w_2} \cdot (f_1w_1+f_2w_2)^2 .
\end{eqnarray}
In the above derivation, we used the fact that $\Mean(X_1)=nNf_1\pi_1$ and $\Var(X_1)=nN^2f_1\pi_1f_2\pi_2$.
This comes from the fact that $X_1$ actually follows the binomial distribution $X_1 = Binom(Nf_1 \pi_1, n).$

For $w_1=w_2$, we are back in the UIS case. But this is not necessarily the optimal choice of weights. Indeed, a quick application of Lagrange multipliers reveals that
$\Var(\hat{f}^\WIS_1)$ is minimized when
\begin{equation}\label{eq:equal_total_weight}
    w_1\, f_1 = f_2\, w_2.
\end{equation}
Moreover, analogous analysis shows that \eqn{\ref{eq:equal_total_weight}} minimizes $\Var(\hat{f}_2^\WIS)$ as well. In other words, the estimators of both $f_1$ and $f_2$ have the lowest variance if the total weighted mass of $C_1$ is equal to that of $C_2$.
This implies, in expectation, equal allocation of samples between $C_1$ and $C_2$, \ie
$$n^\WIS_i\ =\ \frac{n}{|\mathcal{C}|}.$$

Finally, we can use \eqn{\ref{eq:UIS_f_1_est}}, \eqn{\ref{eq:WIS_f_1_var}} and \eqn{\ref{eq:equal_total_weight}} to calculate the gain~$\alpha$ of WIS over UIS
\begin{equation}\label{eq: gain under relative sizes}
	\alpha\ =\ \frac{1}{4f_1f_2} \quad (\geq 1).
\end{equation}
Note that we always have $\alpha\geq 1$, and $\alpha$ grows quickly with growing difference between $f_1$ and $f_2$.

\section*{Appendix D: Optimal WRW weights in Fig.~\ref{fig:tiny_example}(a)} 
Every time WRW visits the white node/category in Fig.~\ref{fig:tiny_example}(a), the next node is chosen uniformly from red and green categories. 
We stay in this selected category for $k$ rounds, where $k$ is a geometric random variable with parameter $p=w_2/(w_1\!+\!w_2)\in[0,1]$. 
Next, we come back to the white category, and reiterate the process. 
So the number $n_\sss{red}$ of times the red category is sampled is
$$n_\sss{red} = \sum_1^{Binom(0.5,n_\sss{wh})} Geom(p),$$
where $n_\sss{wh}$ is the number of visits to the white category.
Because the random variables generated by $Binom(0.5,n_\sss{wh})$ and $Geom(p)$ are independent, we can write
{\small
\begin{eqnarray}
\nonumber	\Mean[n_\sss{red}] &=& \Mean[Binom(0.5,n_\sss{wh})]\cdot \Mean[Geom(p)] \ =\ 0.5 n_\sss{wh} / p  \\
\nonumber	\Var[n_\sss{red}] &=& \Mean[Binom()]\Var[Geom()]+ \Mean^2[Geom()]\Var[Binom()]\\
\nonumber						&=& \frac{n_\sss{wh}}{4p^2} (3-2p).
\end{eqnarray}}
A possible unbiased estimator of the relative size~$f_\sss{red}$ of red category (among relevant categories) is
$$\est{f}_\sss{red} = \frac{n_\sss{red}}{n_\sss{wh}/p},$$
for which we get
\begin{eqnarray}
\nonumber \Mean[\est{f}_\sss{red}] &=& \frac{\Mean[n_\sss{red}]}{n_\sss{wh}/p} \ =\  \frac{1}{2} \quad \textrm{(unbiased)}\\
\nonumber \Var[\est{f}_\sss{red}] &=&  \frac{\Var[n_\sss{red}]}{(n_\sss{wh}/p)^2} \ =\  \frac{3-2p}{4n_\sss{wh}}.
\end{eqnarray}
This variance is expressed as a function of~$n_\sss{wh}$, and not of the total sample length~$n$. 
However, note that $n_\sss{wh}$ drops with decreasing~$p$.
Consequently, the variance $\Var[\est{f}_\sss{red}]$ (expressed as a function of $n_\sss{wh}$ or of $n$) is minimized for $p=1$, \ie for $w_1=0$ and $w_2>0$ (and $n_\sss{wh}\eq n/2$).


\begin{thebibliography}{10}

\bibitem{wrw-dataset}
{Weighted Random Walks of the Facebook social graph:
  http://odysseas.calit2.uci.edu/research/}, 2011.

\bibitem{Achlioptas05_On_the_bias_of_traceroute_sampling}
D.~Achlioptas, A.~Clauset, D.~Kempe, and C.
\newblock {On the bias of traceroute sampling: or, power-law degree
  distributions in regular graphs}.
\newblock {\em Journal of the ACM}, 2009.

\bibitem{Ahn-WWW-07}
Y.~Ahn, S.~Han, H.~Kwak, S.~Moon, and H.~Jeong.
\newblock {Analysis of topological characteristics of huge online social
  networking services}.
\newblock In {\em WWW}, pages 835--844, 2007.

\bibitem{AldousBookInPreparation}
D.~Aldous and J.~A. Fill.
\newblock {\em {Reversible Markov Chains and Random Walks on Graphs}}.
\newblock In preparation.

\bibitem{Avrachenkov2010}
K.~Avrachenkov, B.~Ribeiro, and D.~Towsley.
\newblock {Improving Random Walk Estimation Accuracy with Uniform Restarts}.
\newblock In {\em I7th Workshop on Algorithms and Models for the Web Graph},
  2010.

\bibitem{Backstrom2011_WWW}
L.~Backstrom and J.~Kleinberg.
\newblock {Network Bucket Testing}.
\newblock In {\em WWW}, 2011.

\bibitem{Backstrom2011}
L.~Backstrom and J.~Leskovec.
\newblock {Supervised Random Walks: Predicting and Recommending Links in Social
  Networks}.
\newblock In {\em ACM International Conference on Web Search and Data Minig
  (WSDM)}, 2011.

\bibitem{snowball-bias}
L.~Becchetti, C.~Castillo, D.~Donato, and A.~Fazzone.
\newblock {A comparison of sampling techniques for web graph characterization}.
\newblock In {\em LinkKDD}, 2006.

\bibitem{Bernard2010}
H.~R. Bernard, T.~Hallett, A.~Iovita, E.~C. Johnsen, R.~Lyerla, C.~McCarty,
  M.~Mahy, M.~J. Salganik, T.~Saliuk, O.~Scutelniciuc, G.~a. Shelley,
  P.~Sirinirund, S.~Weir, and D.~F. Stroup.
\newblock {Counting hard-to-count populations: the network scale-up method for
  public health}.
\newblock {\em Sexually Transmitted Infections}, 86(Suppl 2):ii11--ii15, Nov.
  2010.

\bibitem{Boyd2004_mixing}
S.~Boyd, P.~Diaconis, and L.~Xiao.
\newblock {Fastest mixing Markov chain on a graph}.
\newblock {\em SIAM review}, 46(4):667--689, 2004.

\bibitem{Chakrabarti1999_focused}
S.~Chakrabarti.
\newblock {Focused crawling: a new approach to topic-specific Web resource
  discovery}.
\newblock {\em Computer Networks}, 31(11-16):1623--1640, May 1999.

\bibitem{Cochran1977}
W.~G. Cochran.
\newblock {\em {Sampling Techniques}}, volume~20 of {\em McGraw-Hil Series in
  Probability and Statistics}.
\newblock Wiley, 1977.

\bibitem{Diligenti2000_focused}
M.~Diligenti, F.~Coetzee, S.~Lawrence, C.~Giles, and M.~Gori.
\newblock {Focused crawling using context graphs}.
\newblock In {\em Proceedings of the 26th International Conference on Very
  Large Data Bases}, pages 527--534, 2000.

\bibitem{Duffield2005}
N.~Duffield, C.~Lund, and M.~Thorup.
\newblock {Learn more, sample less: control of volume and variance in network
  measurement}.
\newblock {\em IEEE Transactions on Information Theory}, 51(5):1756--1775, May
  2005.

\bibitem{Gentle2003}
J.~Gentle.
\newblock {\em {Random number generation and Monte Carlo methods}}.
\newblock Springer Verlag, 2003.

\bibitem{mcmc-book}
W.~R. Gilks, S.~Richardson, and D.~J. Spiegelhalter.
\newblock {\em {Markov Chain Monte Carlo in Practice}}.
\newblock Chapman and Hall/CRC, 1996.

\bibitem{Gjoka2010}
M.~Gjoka, M.~Kurant, C.~T. Butts, and A.~Markopoulou.
\newblock {Walking in Facebook: A Case Study of Unbiased Sampling of OSNs}.
\newblock In {\em INFOCOM}, 2010.

\bibitem{Gkantsidis2004}
C.~Gkantsidis, M.~Mihail, and A.~Saberi.
\newblock {Random walks in peer-to-peer networks}.
\newblock In {\em INFOCOM}, 2004.

\bibitem{HansenHurwitz1943}
M.~Hansen and W.~Hurwitz.
\newblock {On the Theory of Sampling from Finite Populations}.
\newblock {\em Annals of Mathematical Statistics}, 14(3), 1943.

\bibitem{Heckathorn97_RDS_introduction}
D.~D. Heckathorn.
\newblock {Respondent-Driven Sampling: A New Approach to the Study of Hidden
  Populations}.
\newblock {\em Social Problems}, 44:174--199, 1997.

\bibitem{Henzinger2000}
M.~R. Henzinger, A.~Heydon, M.~Mitzenmacher, and M.~Najork.
\newblock {On near-uniform URL sampling}.
\newblock In {\em WWW}, 2000.

\bibitem{Kalos1986}
M.~H. Kalos and P.~A. Whitlock.
\newblock {\em {Monte carlo methods. Volume I: Basics}}.
\newblock Wiley, 1986.

\bibitem{Kolaczyk2009}
E.~D. Kolaczyk.
\newblock {\em {Statistical Analysis of Network Data}}, volume~69 of {\em
  Springer Series in Statistics}.
\newblock Springer New York, 2009.

\bibitem{Twitter08}
B.~Krishnamurthy, P.~Gill, and M.~Arlitt.
\newblock {A few chirps about Twitter}.
\newblock In {\em WOSN}, 2008.

\bibitem{Kurant2010}
M.~Kurant, A.~Markopoulou, and P.~Thiran.
\newblock {On the bias of BFS (Breadth First Search)}.
\newblock In {\em ITC, also in arXiv:1004.1729}, 2010.

\bibitem{Lee-Phys-Rev-06}
S.~H. Lee, P.-J. Kim, and H.~Jeong.
\newblock {Statistical properties of Sampled Networks}.
\newblock {\em Phys. Rev. E}, 73:16102, 2006.

\bibitem{Leskovec2006_sampling_from_large_graphs}
J.~Leskovec and C.~Faloutsos.
\newblock {Sampling from large graphs}.
\newblock In {\em KDD}, pages 631--636, 2006.

\bibitem{Lohr2009}
S.~Lohr.
\newblock {\em {Sampling: design and analysis}}.
\newblock Brooks/Cole, second edition, 2009.

\bibitem{Lovasz93}
L.~Lov\'{a}sz.
\newblock {Random walks on graphs: A survey}.
\newblock {\em Combinatorics, Paul Erdos is Eighty}, 2(1):1--46, 1993.

\bibitem{Metropolis1953}
N.~Metropolis, A.~W. Rosenbluth, M.~N. Rosenbluth, A.~H. Teller, and E.~Teller.
\newblock {Equation of state calculation by fast computing machines}.
\newblock {\em Journal of Chemical Physics}, 21:1087--1092, 1953.

\bibitem{MisloveWosn08}
A.~Mislove, H.~S. Koppula, K.~P. Gummadi, P.~Druschel, and B.~Bhattacharjee.
\newblock {Growth of the Flickr social network}.
\newblock In {\em WOSN}, 2008.

\bibitem{Mislove2007}
A.~Mislove, M.~Marcon, K.~P. Gummadi, P.~Druschel, and B.~Bhattacharjee.
\newblock {Measurement and analysis of online social networks}.
\newblock In {\em IMC}, pages 29--42, 2007.

\bibitem{Mohaisen2010}
A.~Mohaisen, A.~Yun, and Y.~Kim.
\newblock {Measuring the mixing time of social graphs}.
\newblock {\em IMC}, 2010.

\bibitem{Neyman1934}
J.~Neyman.
\newblock {On the Two Different Aspects of the Representative Method: The
  Method of Stratified Sampling and the Method of Purposive Selection}.
\newblock {\em Journal of the Royal Statistical Society}, 97(4):558, 1934.

\bibitem{Rasti09-RDS}
A.~Rasti, M.~Torkjazi, R.~Rejaie, N.~Duffield, W.~Willinger, and D.~Stutzbach.
\newblock {Respondent-driven sampling for characterizing unstructured
  overlays}.
\newblock In {\em Infocom Mini-conference}, pages 2701--2705, 2009.

\bibitem{Rasti2008}
A.~H. Rasti, M.~Torkjazi, R.~Rejaie, and D.~Stutzbach.
\newblock {Evaluating Sampling Techniques for Large Dynamic Graphs}.
\newblock In {\em Technical Report}, volume~1, 2008.

\bibitem{Ribeiro2010}
B.~Ribeiro and D.~Towsley.
\newblock {Estimating and sampling graphs with multidimensional random walks}.
\newblock In {\em IMC}, volume 011, 2010.

\bibitem{Ribeiro2010a}
B.~Ribeiro, P.~Wang, and D.~Towsley.
\newblock {On Estimating Degree Distributions of Directed Graphs through
  Sampling}.
\newblock {\em UMass Technical Report}, 2010.

\bibitem{Salganik2004}
M.~Salganik and D.~D. Heckathorn.
\newblock {Sampling and estimation in hidden populations using
  respondent-driven sampling}.
\newblock {\em Sociological Methodology}, 34(1):193--240, 2004.

\bibitem{Stutzbach2006-unbiased-p2p}
D.~Stutzbach, R.~Rejaie, N.~Duffield, S.~Sen, and W.~Willinger.
\newblock {On unbiased sampling for unstructured peer-to-peer networks}.
\newblock In {\em IMC}, 2006.

\bibitem{VolzHeckathorn08}
E.~Volz and D.~D. Heckathorn.
\newblock {Probability based estimation theory for respondent driven sampling}.
\newblock {\em Journal of Official Statistics}, 24(1):79--97, 2008.

\bibitem{Willinger09-OSN_Research}
W.~Willinger, R.~Rejaie, M.~Torkjazi, M.~Valafar, and M.~Maggioni.
\newblock {OSN Research: Time to Face the Real Challenges}.
\newblock In {\em HotMetrics}, 2009.

\bibitem{Wilson09}
C.~Wilson, B.~Boe, A.~Sala, K.~P.~N. Puttaswamy, and B.~Y. Zhao.
\newblock {User interactions in social networks and their implications}.
\newblock In {\em EuroSys}, 2009.

\end{thebibliography}
\end{document}